\def\submit{1}   
	\newcommand{\full}[1]{}
	\newcommand{\confer}[1]{#1}
	\newcommand{\full}[1]{#1}
	\newcommand{\confer}[1]{}
\newcommand{\myproofend}{\hfill\Box}
\newcommand{\bv}{{\bf v}}
\newcommand{\Sec}[1]{\hyperref[sec:#1]{\S\ref*{sec:#1}}} 
\newcommand{\Eqn}[1]{\hyperref[eq:#1]{(\ref*{eq:#1})}} 
\newcommand{\Fig}[1]{\hyperref[fig:#1]{Fig.\,\ref*{fig:#1}}} 
\newcommand{\Tab}[1]{\hyperref[tab:#1]{Tab.\,\ref*{tab:#1}}} 
\newcommand{\Thm}[1]{\hyperref[thm:#1]{Thm.\,\ref*{thm:#1}}} 
\newcommand{\Lem}[1]{\hyperref[lem:#1]{Lem.\,\ref*{lem:#1}}} 
\newcommand{\Prop}[1]{\hyperref[prop:#1]{Prop.~\ref*{prop:#1}}} 
\newcommand{\Cor}[1]{\hyperref[cor:#1]{Cor.~\ref*{cor:#1}}} 
\newcommand{\Def}[1]{\hyperref[def:#1]{Defn.~\ref*{def:#1}}} 
\newcommand{\Alg}[1]{\hyperref[alg:#1]{Alg.~\ref*{alg:#1}}} 
\newcommand{\Ex}[1]{\hyperref[ex:#1]{Ex.~\ref*{ex:#1}}} 
\newcommand{\Clm}[1]{\hyperref[clm:#1]{Claim~\ref*{clm:#1}}} 
\newcommand{\M}[1]{{{{\MakeUppercase{#1}}}}} 
\newcommand{\qtext}[1]{\quad\text{#1}\quad} 
\begin{document}

\title{The Similarity between Stochastic Kronecker \\and Chung-Lu Graph Models\thanks{This work was funded by the applied mathematics program at the United
States Department of Energy and performed at Sandia National
Laboratories, a multiprogram laboratory operated by Sandia
Corporation, a wholly owned subsidiary of Lockheed Martin Corporation,
for the United States Department of Energy's National Nuclear Security
Administration under contract DE-AC04-94AL85000.} }

\author{Ali Pinar\thanks{Sandia National Laboratories, CA, apinar@sandia.gov} \\
\and 
C. Seshadhri\thanks{Sandia National Laboratories, CA, scomand@sandia.gov}
\and
Tamara G. Kolda\thanks{Sandia National Laboratories, CA, tgkolda@sandia.gov} }
\date{}

\maketitle
\pagestyle{headings}
\setcounter{page}{1}
\pagenumbering{arabic}

\begin{abstract}
The analysis of massive graphs is now becoming a very important part
of science and industrial research. This has led to the construction
of a large variety of graph models, each with their own advantages. 
The \emph{Stochastic Kronecker Graph} (SKG) model has been chosen by
the Graph500 steering committee to create supercomputer benchmarks for 
graph algorithms. The major reasons for this are its easy parallelization
and ability to mirror real data. Although SKG is easy to implement,
there is little understanding of the properties and behavior of this model.

We show that the parallel variant of the edge-configuration model given
by Chung and Lu (referred to as CL) is notably similar to the SKG model. 
The graph properties of an SKG are extremely close to those of a CL
graph generated with the appropriate parameters.
Indeed, the final probability matrix 
used by SKG is almost identical to that of a CL model. This implies that the
graph distribution represented by SKG is almost the same as that given
by a CL model. We also show that when it comes to fitting real data, 
CL performs as well as SKG based on empirical studies of graph
properties. CL has the added benefit of a trivially simple fitting procedure and exactly matching the degree distribution. Our results suggest that
users of the SKG model should consider the CL model because of its similar
properties, simpler structure, and ability to fit a wider range of degree distributions. At the very least, CL is a good control model to compare against.
\end{abstract}

\section{Introduction}
\label{sec:intro}

With more and more data being represented as large graphs, 
network analysis is becoming a major topic of scientific research.
Data that come from social networks, the Web, patent citation networks,
and power grid structures are increasingly being viewed as  massive graphs.
These graphs usually have peculiar properties that distinguish them
from standard random graphs (like those generated from the 
Erd\"{o}s-R\'{e}nyi model). Although we have a lot of evidence
for these properties, we do not have a thorough understanding
of \emph{why} these properties occur. Furthermore, it is not at all
clear how to generate synthetic graphs that have a similar
behavior.

Hence, \emph{graph modeling} is a very important topic of study. 
There may be some disagreement as to the characteristics of a good
model, but the survey \cite{ChFa06} gives a fairly comprehensive list of desired
properties. As we deal with larger and larger graphs, the efficiency
and speed as well as implementation details become deciding
factors in the usefulness of a model. The theoretical benefit
of having a good, fast model is quite clear. However, the benefits of  having good models go  beyond an ability to generate large graphs, since 
such models provide insight into structural properties and the processes that generate large
graphs.

The \emph{Stochastic Kronecker graph} (SKG) \cite{LeFa07,LeChKlFa10}, a generalization
of \emph{recursive matrix} (R-MAT) model \cite{ChZhFa04}, is a model
for large graphs that has received a lot of attention.
It involves few parameters and has an embarrassingly
parallel implementation (so each edge of the graph
can be independently generated). The importance of this model 
cannot be understated --- it has been
chosen to create graphs for the Graph500 supercomputer
benchmark \cite{Graph500}. Moreover,  many researchers generate  SKGs for testing their algorithms~\cite{Vineet:2009, 5713180, Pearce:2010, BulucGilbert, Pothen,  HOO11, Plimpton2011610, 4536261, 4454426,Schmidt2009417}. 

Despite the role of this model in graph benchmarking and algorithm testing, 
precious little is truly known about its properties. The model description is quite
simple, but varying the parameters of the model can have quite drastic
effects on  the properties of the graphs being generated. Understanding \emph{what} goes on while generating
an SKG is extremely difficult. Indeed, merely explaining the
structure of the degree distribution requires a significant amount
of mathematical effort.

Could there be a conceptually simpler model that has properties similar
to SKG? A possible candidate is a simple variant of the Erd\"{o}s-R\'{e}nyi 
model first discussed by Aiello, Chung, and Lu \cite{AiChLu01} and generalized by Chung and Lu \cite{ChLu02, ChLu02-2}. The Erd\"{o}s-R\'{e}nyi model is arguably
the earliest and simplest random graph model \cite{ErRe59, ErRe60}.
The Chung-Lu model (referred to as CL) can be viewed as a version
of the edge configuration model or a weighted Erd\"{o}s-R\'{e}nyi 
graph. 
Given any degree distribution, it generates a random graph with
the same distribution on expectation. 
(The version by Aiello et al.\@ only considered power law distributions.) 
It is very efficient and conceptually very
simple. Amazingly, it has been overlooked as a model to generate
synethetic instances, and is not even considered as a ``control model"
to compare with. (This is probably because of the strong ties to
a standard Erd\"{o}s-R\'{e}nyi graph, which is well known to be
unsuitable for modeling social networks.) A major benefit of this
model is that it can provide graphs with \emph{any} desired degree distribution
(especially power law), something that SKG provably cannot do.

Our aim is to provide a detailed comparison of the SKG and CL models. 
We first compare the graph properties of an SKG graph with an associated
CL graph. We then look at how these models fit real data.
Our observations
show a great deal of similarity between these models.
To explain this, we look directly at the
probability matrices used by these models. This gives insight into the
structure of the graphs generated. We notice that the SKG and CL matrices have
much in common and give evidence that the differences between these 
are only (slightly) quantitative, not qualitative. We also show that for some settings
of the SKG parameters, the SKG and associated CL
models  coincide \emph{exactly}. 
\subsection{Notation and Background}
\label{sec:rmat}

\subsubsection{Stochastic Kronecker Graph (SKG) model} The model takes as input the number of nodes $n$ 
(always a power of $2$), number of edges $m$,
and a $2 \times 2$ generator matrix $\M{T}$. We define $\ell = \log_2 n$ as the number of \emph{levels}.
In theory, the SKG generating matrix can
be larger than $2 \times 2$, but we are unaware of any such examples in practice.
Thus, we assume that the generating matrix has the form
\begin{displaymath}
  \M{T} =
  \begin{bmatrix} 
    t_1 & t_2 \\ 
    t_3 & t_4 
  \end{bmatrix}
  \qtext{with}
  t_1 + t_2 + t_3 + t_4 = 1.
\end{displaymath}
Each edge is inserted according to the probabilities%
\footnote{\scriptsize We have taken a slight liberty in requiring the entries of $T$ to sum to 1. In fact, the SKG model as defined in \cite{LeChKlFa10} works with the matrix $mP$, which is considered the matrix of probabilities for the existence of each individual edge (though it might be more accurate to think of it as an expected value).}
defined by
\begin{displaymath}
  \M{P} = 
  \underbrace{
    \M{T} \otimes \M{T} \otimes \cdots \otimes \M{T}
  }_{\text{$\ell$ times}}.
\end{displaymath}
We will refer to $P_{\textrm{SKG}}$ as the \emph{SKG matrix} associated with these
parameters. Observe that the entries in $P_{\textrm{SKG}}$ sum up to $1$, and hence
it gives a probability distribution over all pairs $(i,j)$. This 
is the probability that a single edge insertion results in the edge
$(i,j)$. By repeatedly using this distribution to generate $m$ edges,
we obtain our final graph.

In practice, the matrix $\M{P_{\textrm{SKG}}}$ is never formed explicitly. 
Instead, each edge is inserted as follows.
Divide the adjacency matrix into
four quadrants, and choose one of them with the corresponding probability $t_1, t_2, t_3$, or $t_4$.
Once a quadrant is chosen, repeat this recursively in that quadrant. Each time we iterate,
we end up in a square submatrix whose dimensions are exactly halved. After $\ell$ iterations,
we reach a single cell of the adjacency matrix, and an edge is inserted. This is independently repeated
$m$ times to generate the final graph.
Note that all edges can be inserted in parallel. This is one of the major advantages
of the SKG model and why it is appropriate for generating large supercomputer benchmarks.

A noisy version of SKG (called NSKG) has been recently designed in \cite{SePiKo11, SePiKo11-arxiv}. This chooses
the probability matrix  
\[\M{P} = \M{T}_1 \otimes \cdots \otimes \M{T_\ell},\] 
 where
each $\M{T}_i$ is a specific random perturbation of the original generator matrix $T$. This
has been provably shown to smooth the degree distribution to a lognormal form.

\subsection{Chung-Lu (CL) model} This model can be thought of as a variant of the edge configuration
model. Let us deal with directed graphs  to describe the CL model. Suppose we are given sequences of $n$ in-degrees
$d_1, d_2, \ldots, d_n$, and $n$ out-degrees $d'_1, d'_2, \ldots, d'_n$. We have
$\sum_i d_i = \sum_i d'_i = m$. Consider the probability matrix $P_{\textrm{CL}}$ where
the $(i,j)$ entry is $d_i d'_j/m^2$. (The sum of all entries in $P_{\textrm{CL}}$ is $1$.)
We use this probability matrix to make $m$ edge insertions.

This is slightly different from the standard CL model, where an independent
coin flip is done for every edge. This is done by using the matrix $mP_{\textrm{CL}}$ 
(similar to SKG). 
In practice, we do not generate $P_{\textrm{CL}}$ explicitly, but have a simple $O(m)$ implementation analogous
to that for SKG. Independently for every edge, we choose a source and a sink. Both of these
are chosen independently using the degree sequences as probability distributions.
This is extremely simple to implement and it is very efficient.

\medskip

We will focus on undirected graphs for the rest of this paper. This is done by performing $m$
edge insertions, and considering each of these to be undirected. For real data that is directed,
we symmetrize by removing the direction. 

Given a set of SKG parameters, we can define the associated CL model. Any set of SKG parameters
immediately defines an \emph{expected degree sequence}. In other words, given the SKG
matrix $P_{\textrm{SKG}}$, we can deduce the expected in-(and out)-degrees of the vertices. For this degree
sequence, we can define a CL model. We refer to this as the \emph{associated CL model}
for a given set of SKG parameters. This CL model will be used to define  a probability matrix $P_{\textrm{CL}}$. In
this paper, we will study the relations between $P_{\textrm{SKG}}$ and $P_{\textrm{CL}}$. Whenever we use
the term $P_{\textrm{CL}}$, this will always be the associated CL model of some SKG matrix $P_{\textrm{SKG}}$.

\subsection{Our Contributions}
\label{sec:contributions}

The main message of this work can be stated simply. The SKG model is close
enough to its associated CL model that most users of SKG could just as well use the CL model
for generating graphs. These models have very similar properties  both in terms of ease of use and in terms of the graphs they generate. Moreover, they both
reflect real data to the same extent. The general CL model has the major advantage of generating
any desired degree distribution.

We stress that we do not claim that the CL model accurately represents real graphs, 
or is even the ``right" model to think about. But we feel that it is a good control model, and it
is one that any other model should be compared against. Fitting
CL to a given graph is quite trivial; simply feed the degree distribution of the real
graph to the CL model. Our results suggest
that users of SKG can satisfy most of their needs with a CL model. 

We provide evidence for this in three different ways.

\begin{asparaenum}
	\item \emph{Graph properties of SKG vs CL:} 
          We construct an SKG using known parameter choices
	from the Graph500 specification.
	We then generate CL graphs with the same degree distributions. The comparison
	of graph properties is very telling. The degree distribution are naturally very similar. What is
	surprising is that the clustering coefficients, eigenvalues, and core decompositions match exceedingly
	well. Note that the CL model can be thought of as a uniform random samples of graphs with
	an input degree distribution. It appears that SKG is very similar, where the degree distribution
	is given implicitly by the generator matrix $T$.

	\item \emph{Quantitative comparison of generating matrices $P_{\textrm{SKG}}$ and $P_{\textrm{CL}}$:} 
          We propose an explanation of these observations
	based on comparisons of the probability matrices of SKG
	and CL. We plot the entries of these matrices in various ways,
	and arrive at the conclusion that these matrices are extremely similar. More concretely,
	they represent almost the same distribution on graphs, and differences are very slight.
	This strongly suggests that the CL model is a good
	and simple approximation of SKG, and it has the additional benefit of modeling any degree distribution.
	We prove that under a simple condition on the matrix $T$, 
	$P_{\textrm{SKG}}$ is \emph{identical} to $P_{\textrm{CL}}$. Although this condition is
	often not satisfied by common SKG parameters, it gives strong mathematical intuition behind the 
	similarities. 

	\item \emph{Comparing SKG and CL to real data:} The popularity of SKG is significantly due to fitting
	procedures that compute SKG parameters corresponding to real graphs \cite{LeChKlFa10}.
	This is based on an expensive likelihood optimization procedure. Contrast this with CL, which
	has a trivial fitting mechanism. We show that both these models do a similar job
	of matching graph parameters. Indeed, CL guarantees to fit the degree distribution (up to
	expectations). In other graph properties, neither SKG nor CL is clearly better.
	This is a very compelling reason to consider the CL model as a control model.
\end{asparaenum}

\medskip
In this paper, we focus primarily on SKG instead of the noisy version NSKG because SKG is extremely well
established and used by a large number of researchers \cite{Vineet:2009, 5713180, Pearce:2010, BulucGilbert, Pothen, Schmidt2009417, HOO11, Plimpton2011610, 4536261, 4454426}.
Nonetheless, all our experiments and comparisons are also performed with NSKG. Other than
correcting deficiencies in the degree distribution, the effect of noise
on other graph properties seems fairly small. Hence, for our matrix studies and 
mathematical theorems (\Sec{matrix} and \Sec{math}), we focus on similarities between SKG and CL. We however note
that all our empirical evidence holds for NSKG as well: CL seems to model NSKG graphs
reasonably well (though not as perfectly as SKG), and CL fits real data as well as
NSKG.

\begin{figure*}[htb]

  \centering
\hspace*{-5ex}
  \subfloat[Degree distribution]{\label{fig:g500-deg}
  \includegraphics[width=.7\columnwidth,trim=0 0 0 0]{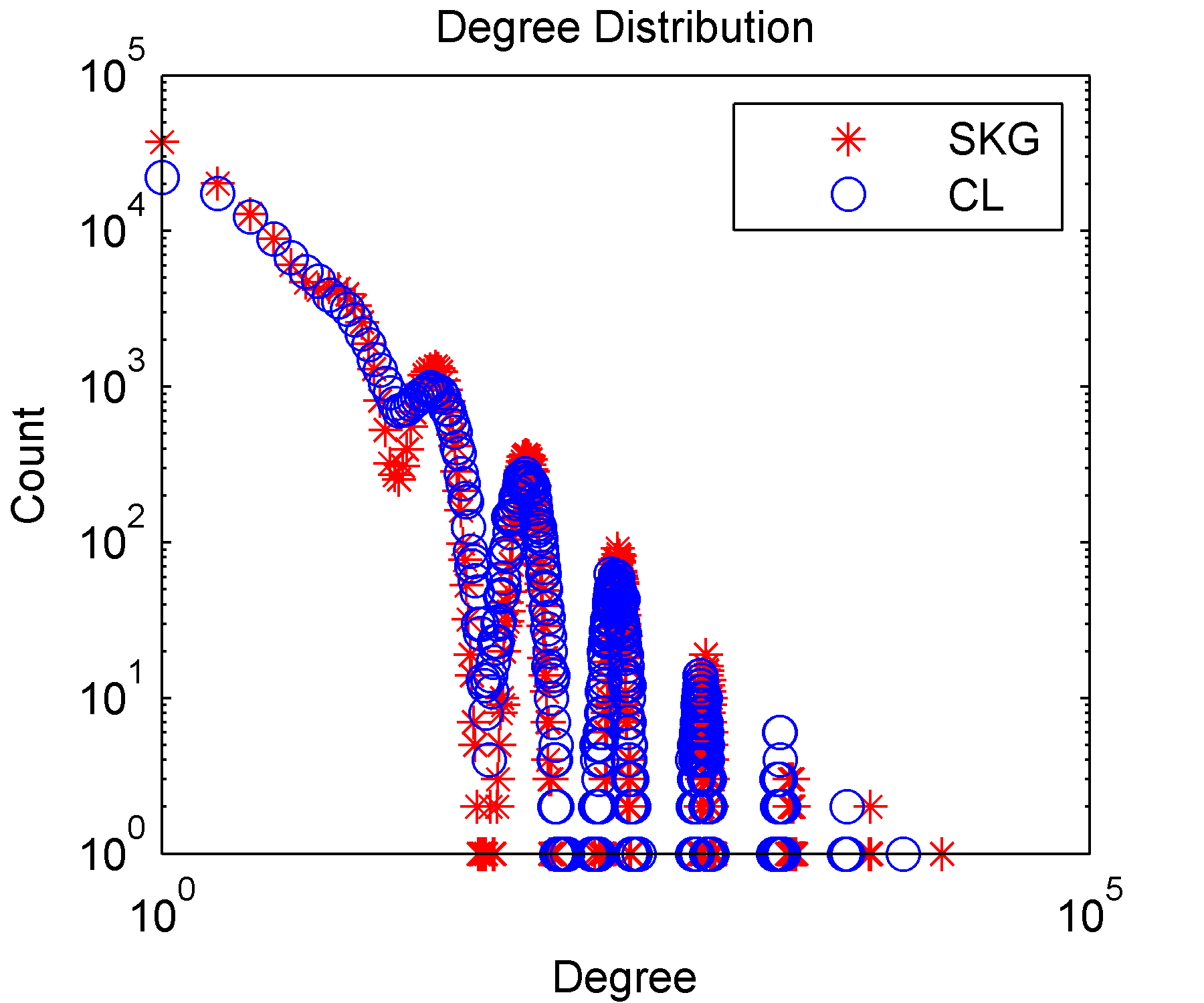}
  }
  \subfloat[Clustering coefficients]{\label{fig:g500-cc}
  \includegraphics[width=.7\columnwidth,trim=0 0 0 0]{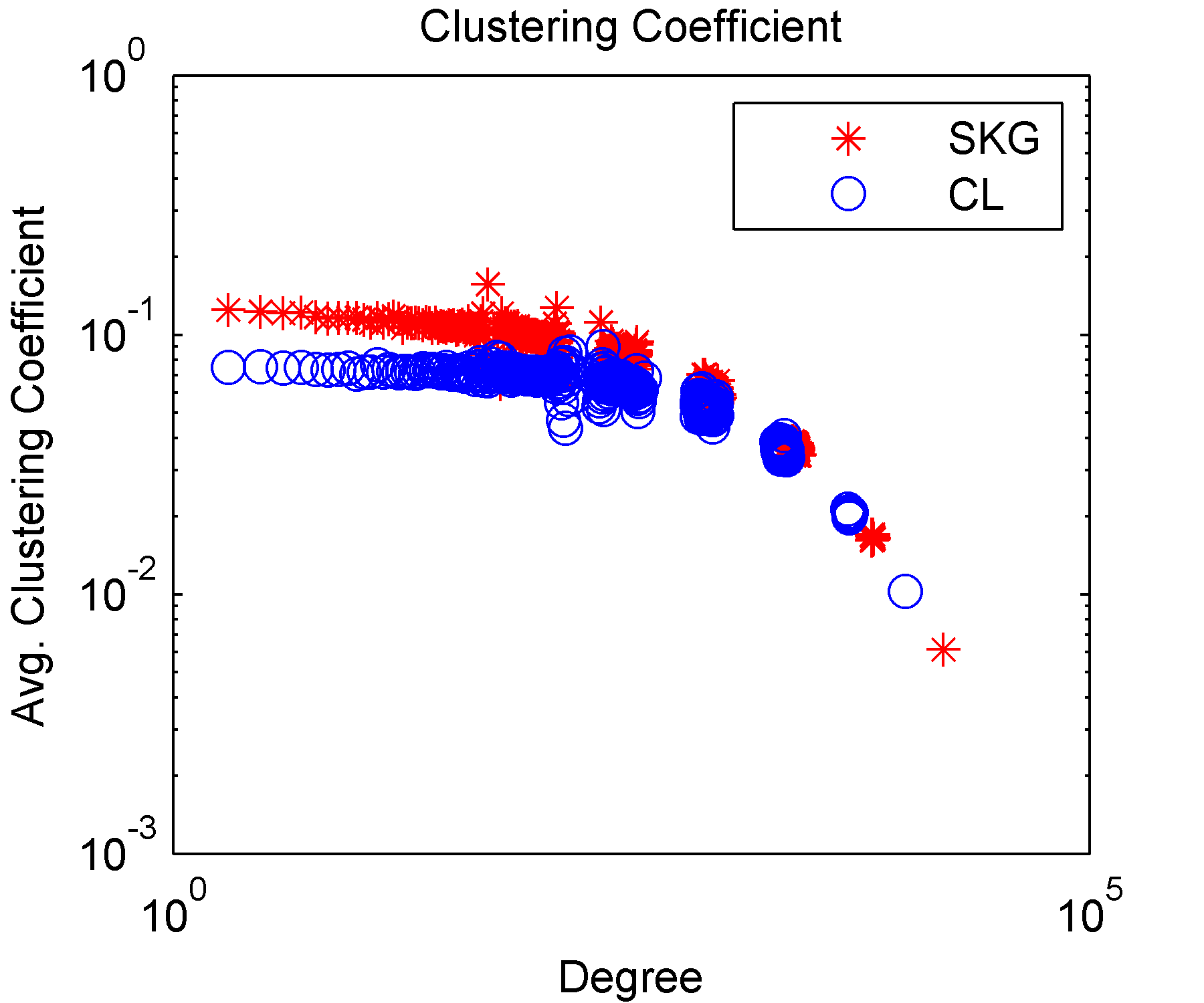}
  }
  \subfloat[Eigenvalues of adjacency matrices]{\label{fig:g500-eig}
  \includegraphics[width=.7\columnwidth,trim=0 0 0 0]{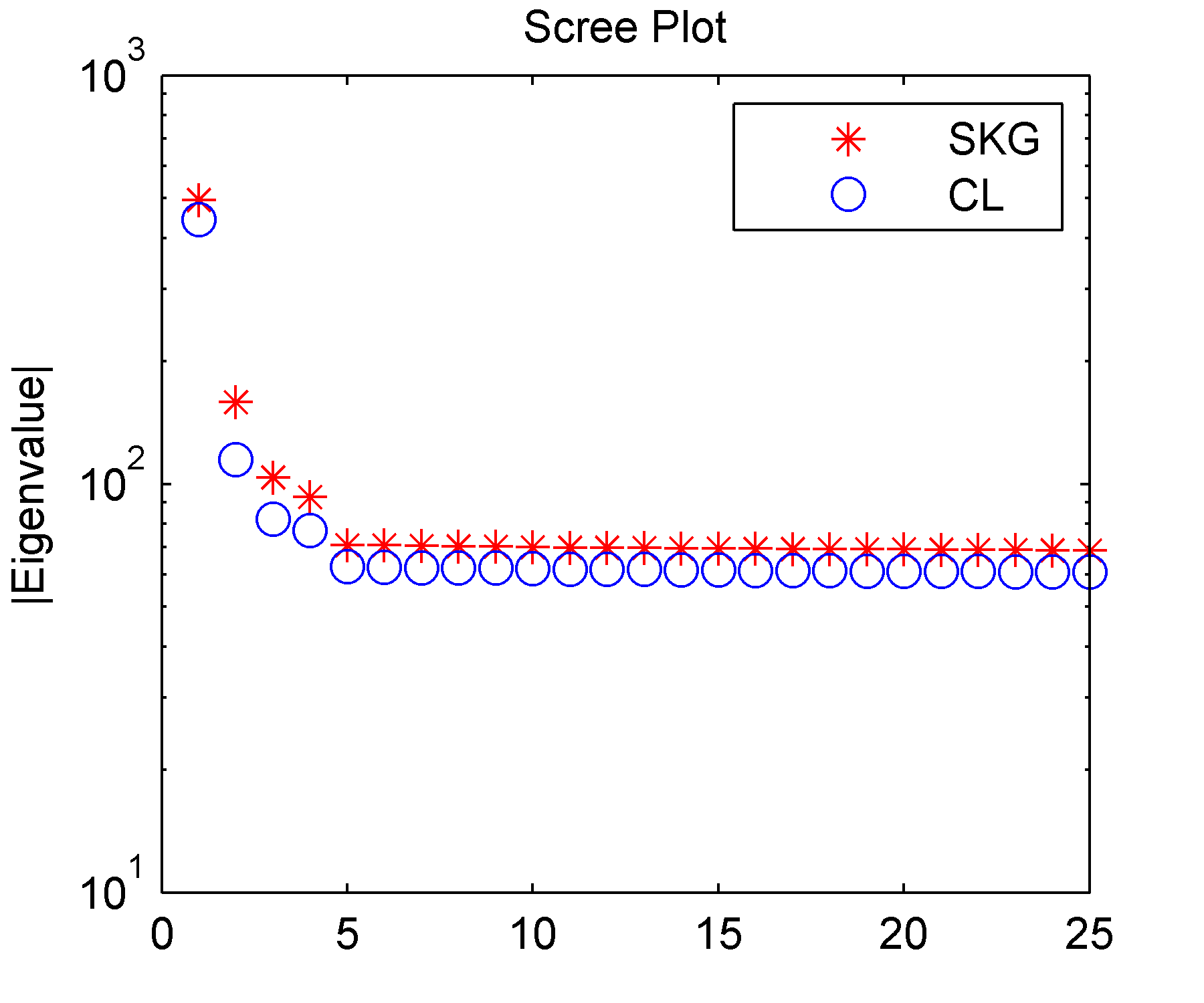}
  }\\
  \subfloat[Assortativity]{\label{fig:g500-assort}
  \includegraphics[width=.7\columnwidth,trim=0 0 0 0]{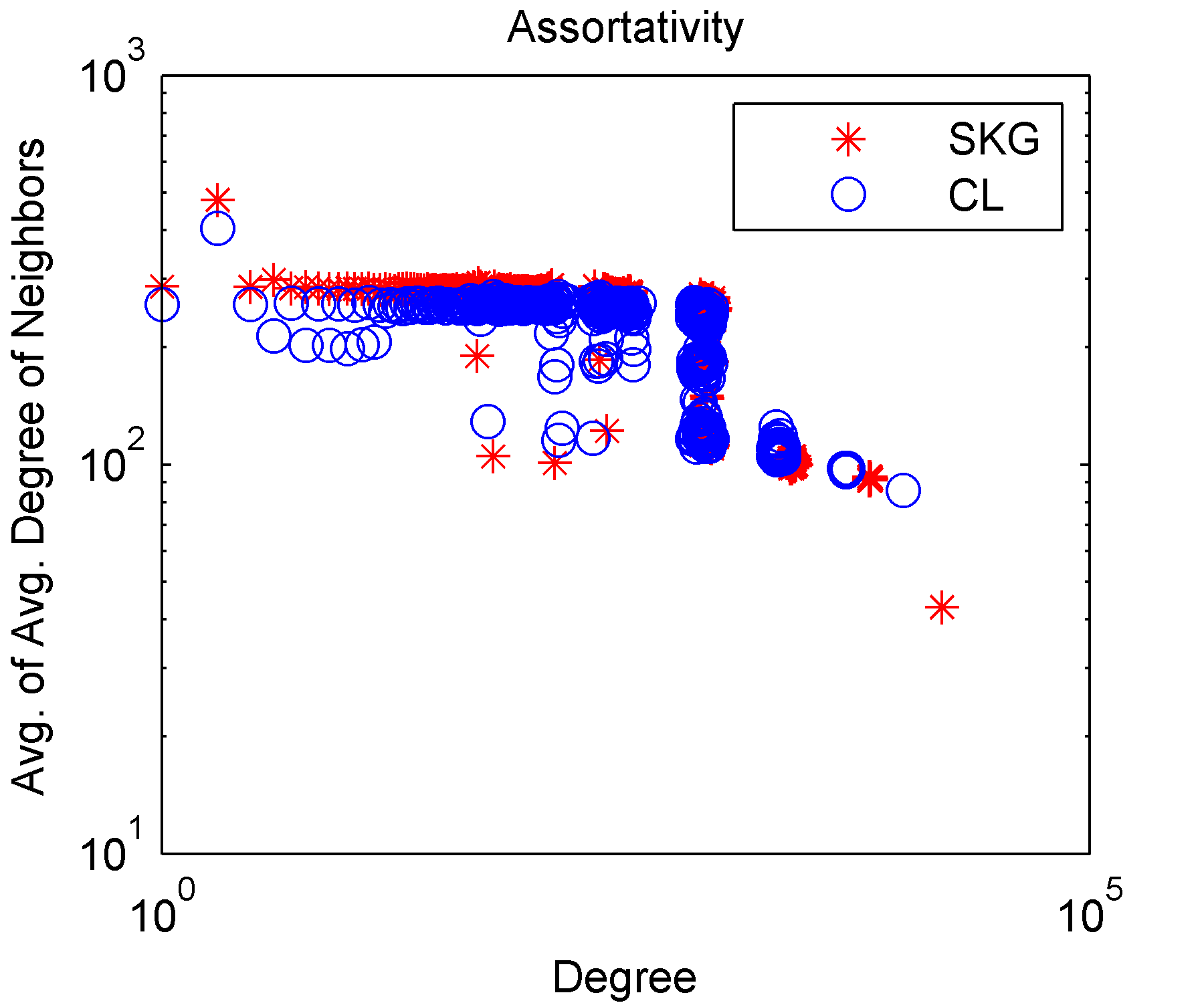}
  }
  \subfloat[Core decompositions]{\label{fig:g500-core}
  \includegraphics[width=.7\columnwidth,trim=0 0 0 0]{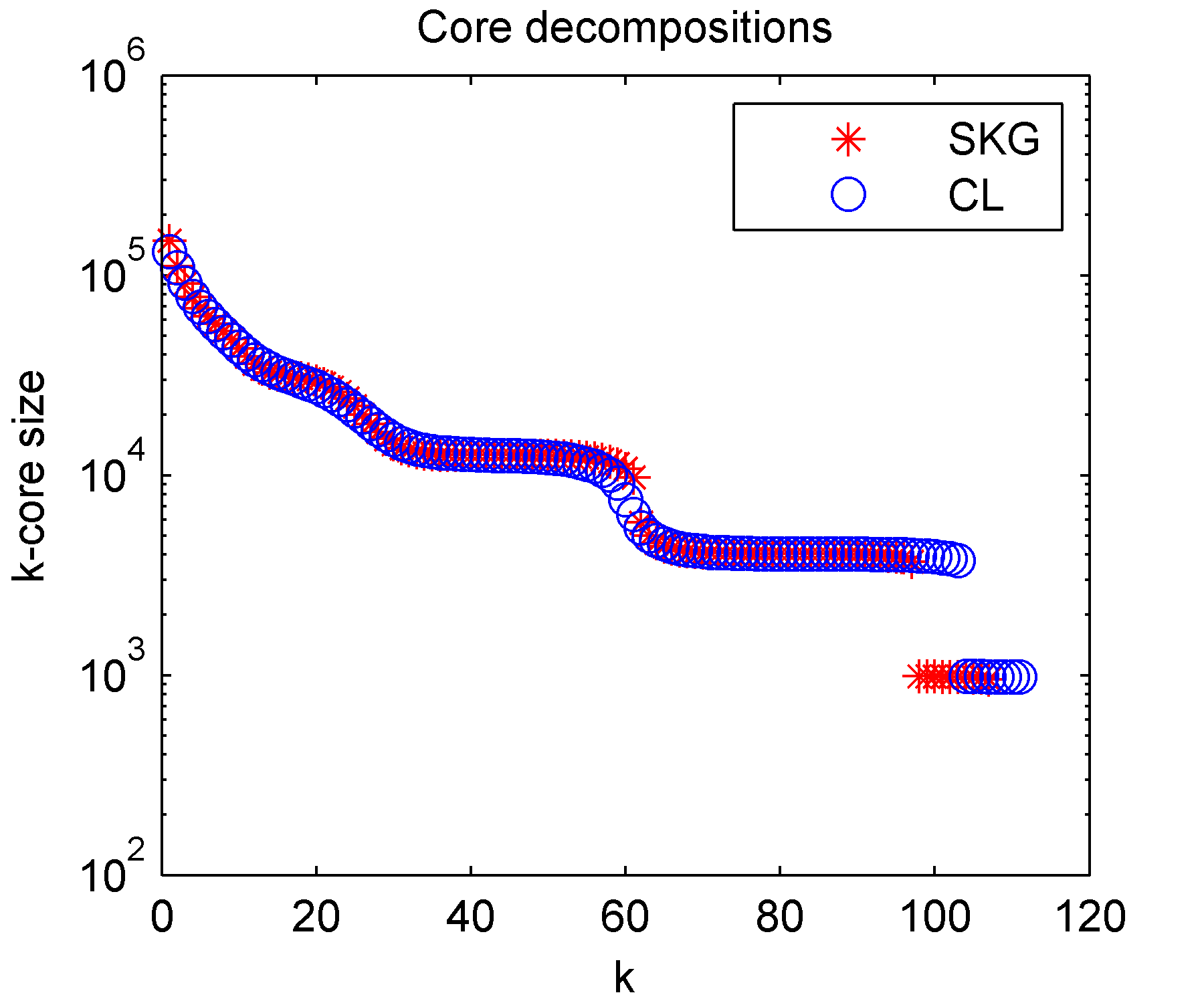}
  }  
  \caption{
  Comparison of  the graph properties of SKG generated with Graph500 parameters and
  an equivalent CL.}
  \label{fig:g500}
\end{figure*}

\subsection{Parameters for empirical study}

We focus attention on the Graph500 benchmark \cite{Graph500}. 
This is primarily for concreteness and the relative importance
of this parameter setting. Our results hold for all the settings of parameters that
we experimented with. For NSKG, there is an additional noise parameter required.
We set this to $0.1$, the setting studied in \cite{SePiKo11}.

\begin{asparaitem}
	\item \textbf{Graph500}: $T=[0.57, 0.19; 0.19, 0.05]$, $\ell \in \{$26, 29, 32, 36, 39, 42$\}$, and $m = 16 \cdot 2^\ell$. We focus on a much smaller setting, $\ell = 18$. 
\end{asparaitem}

\section{Previous Work}
\label{sec:related}

The SKG model was proposed by Leskovec et al.\@ \cite{LeChKlFa05}, as a generalization
of the R-MAT model, given by Chakrabarti et al.\@ \cite{ChZhFa04}. Algorithms
to fit SKG to real data were given by Leskovec and Faloutsos \cite{LeFa07} (extended in \cite{LeChKlFa10}).
This model has been chosen for the 
Graph500 benchmark \cite{Graph500}. 
Kim and Leskovec~\cite{KiLe10} defined a variant of SKG called the Multiplicative Attribute Graph (MAG) model. 

There have been various analyses of the SKG model.
The original paper \cite{LeChKlFa10} provides some basic theorems and empirically shows
a variety of properties. Gro\"er et al.\@ \cite{GrSuPo10}, Mahdian and Xu~\cite{MaXu10},
and Seshadhri et al.\@ \cite{SePiKo11} study how
the model parameters affect the graph properties. It has been conclusively shown
that SKG cannot generate power-law distributions \cite{SePiKo11}. Seshadhri et al.\@ also proposed  noisy SKG (NSKG), which  can provably produce lognormal degree distributions.

Sala et al.\@ \cite{SaCaWiZa10} perform an extensive
empirical study of properties of graph models, including SKGs.
Miller et al.\@ \cite{MiBlWo10} give algorithms to detect anomalies embedded in an SKG.
Moreno et al.\@ \cite{MoKiNeVi10} study the distributional properties of families of SKGs.

A good survey of the edge-configuration model and its variants is given
by Newman \cite{Ne03} (refer to Section IV.B). The specific model of CL was first
given by Chung and Lu \cite{ChLu02, ChLu02-2}. They proved many properties of these 
graphs. Properties of its eigenvalues were given by Mihail and Papadimitriou \cite{MiPa02} and
Chung et al.\@ \cite{ChLuVu03}.

\begin{figure*}[htb]

  \centering
\hspace*{-5ex}
  \subfloat[Degree distribution]{\label{fig:n-g500-deg}
  \includegraphics[width=.7\columnwidth,trim=0 0 0 0]{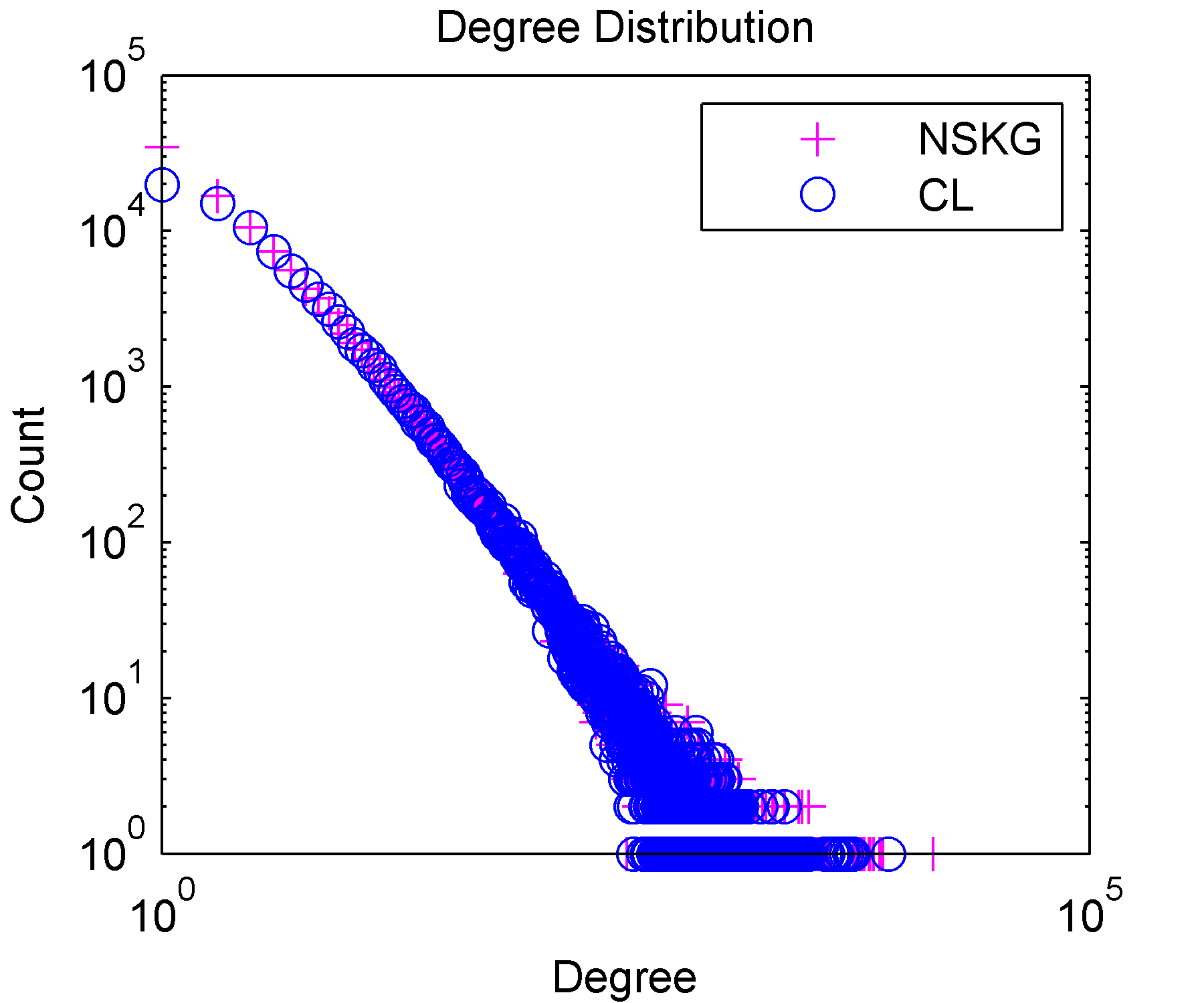}
  }
  \subfloat[Clustering coefficients]{\label{fig:n-g500-cc}
  \includegraphics[width=.7\columnwidth,trim=0 0 0 0]{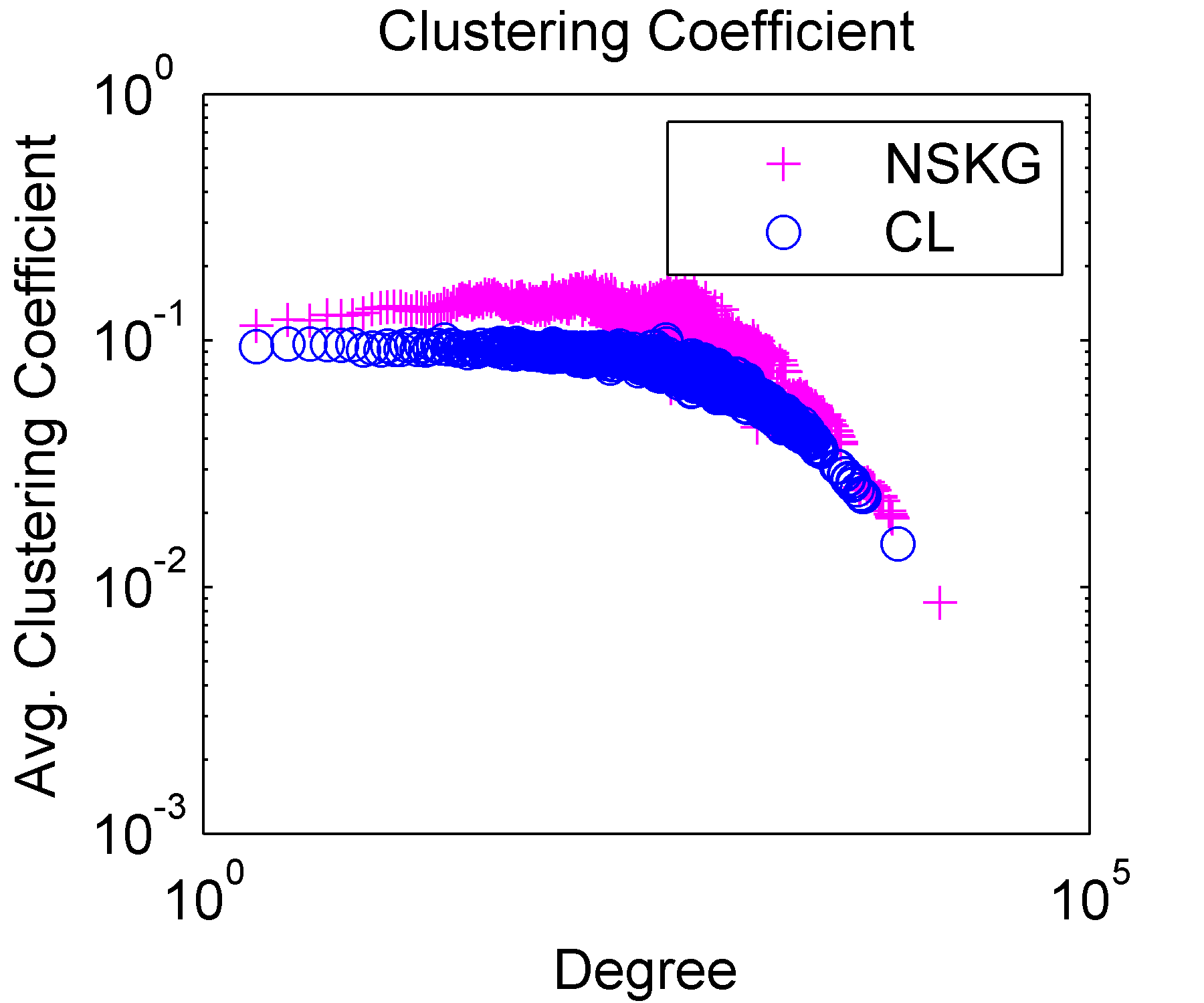}
  }
  \subfloat[Eigenvalues of adjacency matrices]{\label{fig:n-g500-eig}
  \includegraphics[width=.7\columnwidth,trim=0 0 0 0]{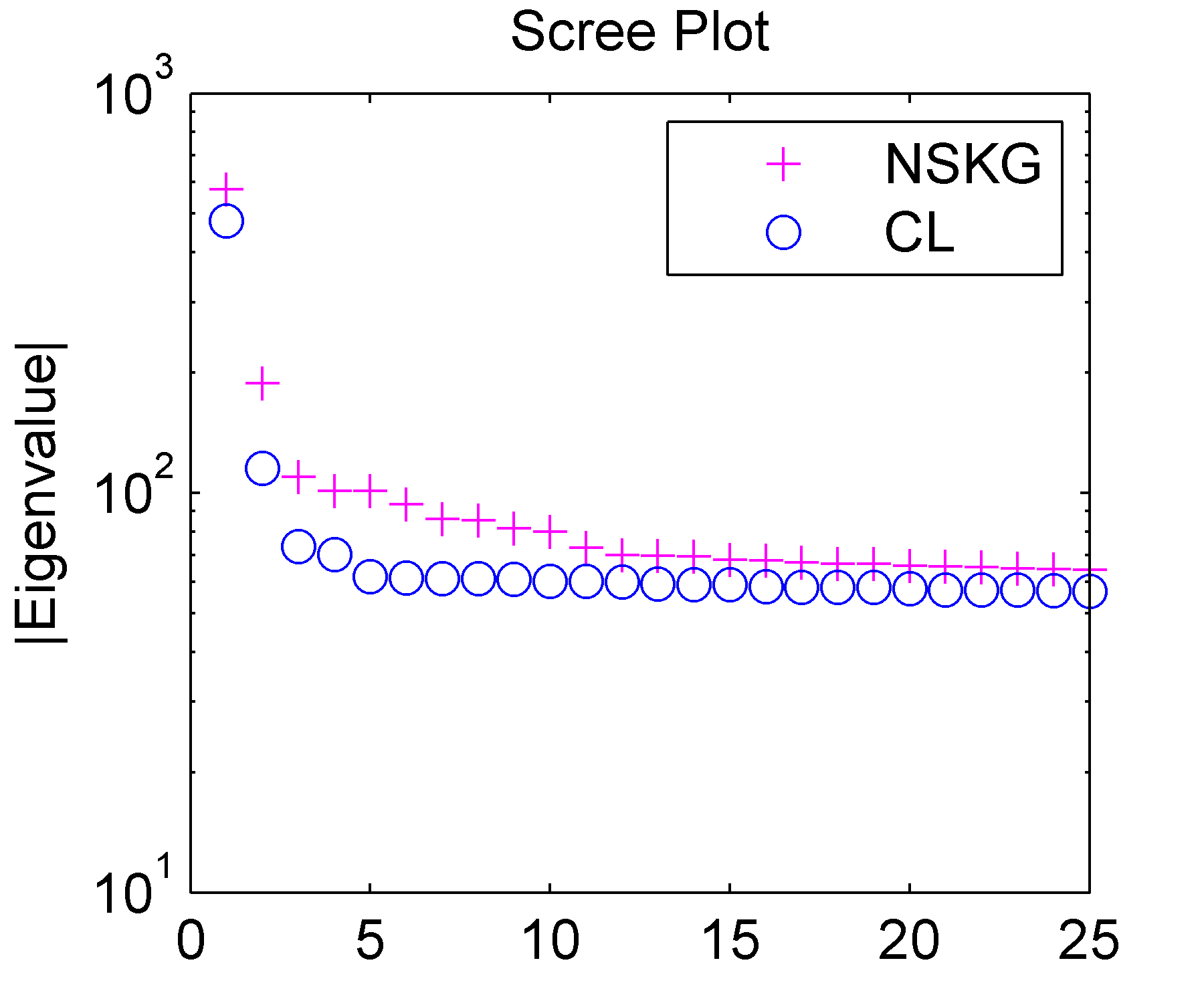}
  }\\
  \subfloat[Assortativity]{\label{fig:n-g500-assort}
  \includegraphics[width=.7\columnwidth,trim=0 0 0 0]{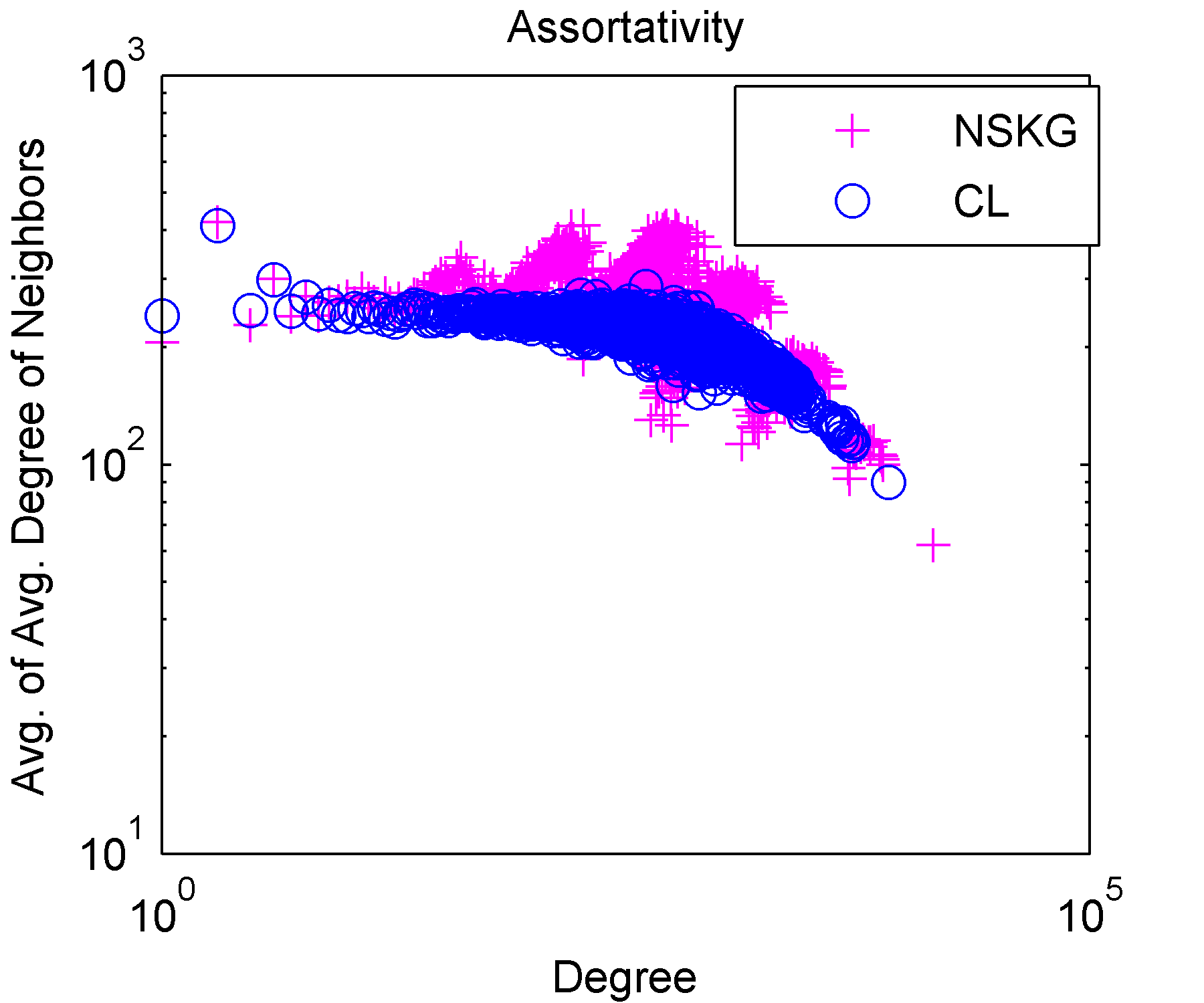}
  }
  \subfloat[Core decompositions]{\label{fig:n-g500-core}
  \includegraphics[width=.7\columnwidth,trim=0 0 0 0]{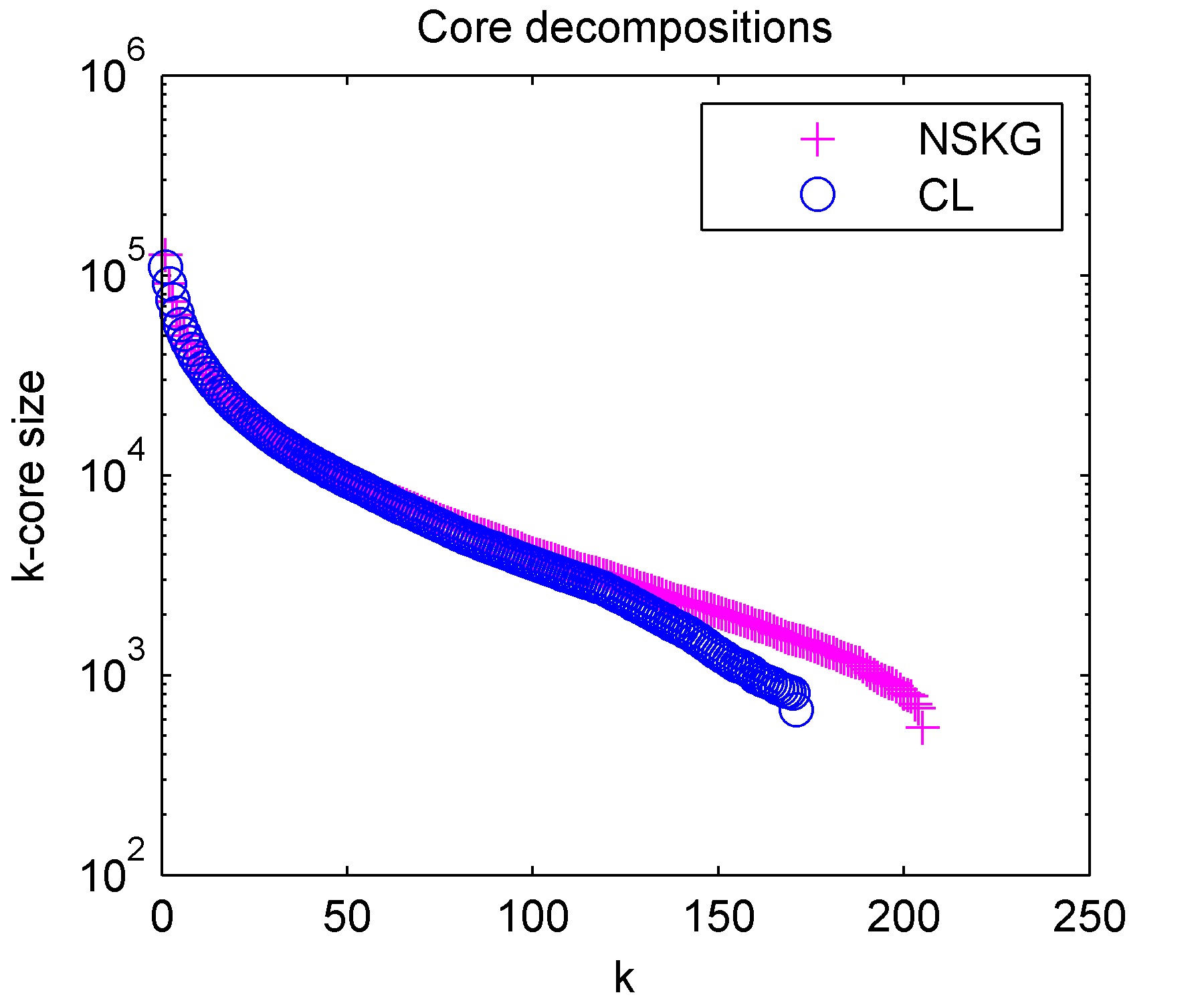}
  }  
  \caption{
  The figures compares the graph properties of NSKG generated with Graph500 parameters and
  an equivalent CL.}
  \label{fig:n-g500}
\end{figure*}
\section{Similarity between SKG and CL} \label{sec:simi}

Our first experiment details the similarities between an SKG and its equivalent CL. We construct
an SKG using the Graph500 parameters with $\ell = 18$. We take the degree distribution
of this graph, and construct a CL graph using this. Various properties
of these graphs are given in \Fig{g500}. We give details below:

\begin{asparaenum}
	\item {\it Degree distribution} (\Fig{g500-deg}): This is the standard degree distribution plot in log-log
	scale. It is no surprise that the degree distributions are almost identical. After all, the weighting
	of CL is done precisely to match this.
	\item {\it Clustering coefficients} (\Fig{g500-cc}): The clustering coefficient of a vertex $i$ is the fraction
	of wedges centered at $i$ that participate in triangles. We plot $d$ versus the average clustering
	coefficient of a degree $d$ vertex in log-log scale. Observe the close similarity. Indeed, we measure 
	the difference between clustering coefficient values at $d$ to be at most $0.04$ (a lower order term
	with respect to commonly measured values in real graphs \cite{GiNe02}).
	\item {\it Eigenvalues} (\Fig{g500-eig}): Here, we plot the first 25 eigenvalues (in absolute value) of the adjacency matrix of the graph   
	in log-scale. The proximity of eigenvalues is very striking. This is a strong suggestion that graph
	structure of the SKG and CL graphs are very similar.
	\item {\it Assortativity} (\Fig{g500-assort}): This is non-standard measure, but we feel that it provides
	a lot of structural intuition. Social networks are often seen to be assortative \cite{New02-1,New02-2}, which means
	that vertex of similar degree tend to be connected by edges. For $d$, define $X_d$ to be the average degree
	of an average degree $d$ vertex. We plot $d$ versus $X_d$ in log-log scale. Note that neither SKG nor CL are
	particularly assortative, and the plots match rather well.
	\item {\it Core decompositions} (\Fig{g500-core}): The $k$-cores of a graph are a very important part of understanding
	the community structure of a graph. The size of the $k$-core is the largest induced subgraph where each vertex has a minimum degree of $k$. This is a subset $S$
	of vertices such that \emph{all} vertices have $k$ neighbors in $S$. These sizes can be quickly determined by performing a \emph{core
	decomposition}. This is obtained by iteratively deleting the minimum degree vertex of the graph. The core plots look 
	amazingly close, and the only difference is that there are slightly larger cores in CL.
\end{asparaenum}
\medskip

All these plots clearly suggest that the Graph500 SKG and its equivalent CL graph are incredibly close in their
graph properties. Indeed, it appears that most important structural properties (especially from a social networks
perspective) are closely related. We will show in \Sec{fit} that CL performs an adequate job of fitting
real data, and is quite comparable to SKG.
We feel that any uses of SKG for benchmarking or test instances generation could
probably be done with CL graphs as well.

For completeness, we plot the same comparisons between NSKG and CL in \Fig{n-g500}. We note
again that the properties are very similar, though NSKG shows more variance in its values.
Clustering coefficient values differ by at most $0.02$ here, and barring small differences
in initial eigenvalues, there is a very close match. The assortativity plots show more oscillations
for NSKG, but CL gets the overall trajectory.

\section{Connection between SKG and CL matrices} \label{sec:matrix}

Is there a principled explanation for the similarity observed in \Fig{g500}? It appears to be much more
than a coincidence, considering the wide variety of graph properties that match. In this section,
we provide an explanation based on the similarity of the probability matrices $P_{\textrm{SKG}}$ and $P_{\textrm{CL}}$. On analyzing
these matrices, we see that they have an extremely close distribution of values. These
matrices are themselves so fundamentally similar, providing more evidence that SKG
itself can be modeled as CL.

We begin by giving precise formulae for the entries of the SKG and CL matrices. This is by no means
new (or even difficult), but it should introduce the reader to the structure of these matrices.
The vertices of the graph are labeled from $[n]$ (the set of positive integers up to $n$). 
For any $i$, $\bv_i$ denotes the 
binary representation of $i$ as an $\ell$-bit vector. For two vectors $\bv_i$ and $\bv_j$,
the number of common zeroes is the number of positions where both vectors are $0$.
The following formula for the SKG entries has already been used in \cite{ChZhFa04,LeChKlFa05,GrSuPo10}.
Observe that these entries (for both SKG and CL) are quite easy to compute and enumerate.

\medskip

\begin{claim} \label{clm:entries} Let $i,j \in [n]$. Let the number of zeroes in $\bv_i$ and $\bv_j$
be $z_i$ and $z_j$ respectively. Let the number of \emph{common} zeroes
be $c_z$. Then
\begin{align*}
&  P_{\textrm{SKG}}(i,j)  = t_1^{c_z}t_2^{z_i - c_z}t_3^{z_j - c_z}t_4^{\ell - z_i - z_j + c_z}, \text{ and} \\
 & P_{\textrm{CL}}(i,j)\!  = \!(t_1+t_2)^{z_i}(t_3+t_4)^{\ell-z_i}(t_1+t_3)^{z_j}(t_2+t_4)^{\ell-z_j}.
\end{align*}
\end{claim}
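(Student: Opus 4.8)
The plan is to read both formulae straight off the Kronecker structure of the matrices, so the whole argument is essentially bookkeeping. First I would fix the standard convention identifying a label $i\in[n]$ with its bit vector $\bv_i\in\{0,1\}^\ell$, and indexing the rows/columns of $\M{T}$ by a single bit with bit value $0$ selecting the first row/column (so $\M{T}_{0,0}=t_1$, $\M{T}_{0,1}=t_2$, $\M{T}_{1,0}=t_3$, $\M{T}_{1,1}=t_4$). Under this convention, the defining identity of the $\ell$-fold Kronecker product gives
\[
P_{\textrm{SKG}}(i,j) \;=\; \prod_{k=1}^{\ell} \M{T}_{(\bv_i)_k,\,(\bv_j)_k},
\]
i.e.\ the $(i,j)$ entry is the product over the $\ell$ levels of the corresponding entry of $\M{T}$. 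The one remaining step is to count how many of the $\ell$ coordinate positions realize each of the four bit patterns. By definition there are $c_z$ positions with pattern $(0,0)$, contributing $t_1^{c_z}$. The positions where $\bv_i$ is $0$ but $\bv_j$ is $1$ are exactly the zeroes of $\bv_i$ that are not common, so there are $z_i-c_z$ of them, contributing $t_2^{z_i-c_z}$; symmetrically the $(1,0)$ positions number $z_j-c_z$ and give $t_3^{z_j-c_z}$. The positions where neither vector is zero number $\ell-(z_i+z_j-c_z)$ by inclusion--exclusion, contributing $t_4^{\ell-z_i-z_j+c_z}$. Multiplying the four factors yields the claimed expression for $P_{\textrm{SKG}}(i,j)$.

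For the CL entry I would use that the associated CL model is built from the \emph{expected} degree sequence of the SKG: the expected in-degree of vertex $i$ is $m$ times the $i$-th row sum of $P_{\textrm{SKG}}$, and the expected out-degree of $j$ is $m$ times the $j$-th column sum. Since $P_{\textrm{CL}}(i,j)=d_i d'_j/m^2$ and the entries of $P_{\textrm{SKG}}$ sum to $1$ (so $m$ really is the sum of the expected degrees and the definition is consistent), the factors of $m$ cancel and $P_{\textrm{CL}}(i,j)$ equals the $i$-th row sum of $P_{\textrm{SKG}}$ times its $j$-th column sum. The key observation is that row (resp.\ column) sums of an $\ell$-fold Kronecker product factor as products of the row (resp.\ column) sums of $\M{T}$ --- which follows immediately by summing the coordinatewise-product formula above over $(\bv_j)_k\in\{0,1\}$ at each level. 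The two row sums of $\M{T}$ are $t_1+t_2$ and $t_3+t_4$, so the $i$-th row sum of $P_{\textrm{SKG}}$ is $(t_1+t_2)^{z_i}(t_3+t_4)^{\ell-z_i}$; likewise the $j$-th column sum is $(t_1+t_3)^{z_j}(t_2+t_4)^{\ell-z_j}$. Their product is exactly the stated formula for $P_{\textrm{CL}}(i,j)$.

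There is no genuinely hard step; the only places where care is needed are (i) pinning down the bit-to-row indexing so that the pattern $(0,0)$ maps to $t_1$ (getting this backwards swaps $t_2$ with $t_3$ in the final answer), and (ii) the inclusion--exclusion count $\ell-z_i-z_j+c_z$ for the all-ones positions, which is the natural spot for an off-by-a-term slip. I would therefore write out the coordinate-pattern count explicitly and derive the row-sum/column-sum factorization as a one-line consequence of the coordinatewise formula rather than quoting it, so that both identities are visibly the same computation viewed two ways.
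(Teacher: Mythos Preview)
Your proposal is correct and is essentially the same argument as the paper's: both compute $P_{\textrm{SKG}}(i,j)$ by counting how many of the $\ell$ bit positions realize each of the four patterns, and both obtain $P_{\textrm{CL}}(i,j)$ by identifying the expected SKG degrees with (products of) the row and column sums of $\M{T}$. Your write-up is a bit more explicit about invoking the coordinatewise Kronecker-product identity and the factorization of marginals, but the underlying computation is identical.
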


\begin{proof} The number of positions where $\bv_i$ is zero but 
$\bv_j$ is one is $z_i - c_z$. Analogously, the number of positions
where only $\bv_j$ is zero is $z_j - c_z$. The number of common ones
is $\ell - z_i - z_j + c_z$.
Hence, the $(i,j)$ entry of the $P_{\textrm{SKG}}$ is 
$t_1^{c_z}t_2^{z_i - c_z}t_3^{z_j - c_z}t_4^{\ell - z_i - z_j + c_z}$.

Let us now compute the $(i,j)$ entry in the corresponding CL
matrix. The probability that a single edge insertion becomes an out-edge of vertex $i$
in SKG is $(t_1+t_2)^{z_i}(t_3+t_4)^{\ell-z_i}$. Hence, the expected out-degree
of $i$ is $m(t_1+t_2)^{z_i}(t_3+t_4)^{\ell-z_i}$. 
Similarly, the expected in-degree of $j$ is  $m(t_1+t_3)^{z_j}(t_2+t_4)^{\ell-z_j}$.
The $(i,j)$ entry of $P_{\textrm{CL}}$ is
$(t_1+t_2)^{z_i}(t_3+t_4)^{\ell-z_i}(t_1+t_3)^{z_j}(t_2+t_4)^{\ell-z_j}.\myproofend$
\end{proof}

\medskip

The inspiration for this section comes from \Fig{skgentries}. Our initial aim was to understand the SKG matrix, and see whether
the structure of the values provides insight into the properties of SKG. Since each entry in this probability matrix
is of the form $t_1^{c_z}t_2^{z_i - c_z}t_3^{z_j - c_z}t_4^{\ell - z_i - z_j + c_z}$, there are many repeated values in
this matrix. For each value in this probability matrix, we simply plot the number of times (the multiplicity) this value
appears in the matrix. (For $P_{\textrm{SKG}}$, this is given in red) This is done for the associated
$P_{\textrm{CL}}$ in blue. Note the uncanny similarity
of the overall shapes for SKG and CL. 
Clearly, $P_{\textrm{SKG}}$ has more distinct values\footnote{This can be proven by inspecting
\Clm{entries}.}, but they are distributed fairly similarly
to $P_{\textrm{CL}}$. Nonetheless, this picture is not very formally convincing, since it only shows the overall behavior of the distribution
of values.

\begin{figure}[thb]
\centerline{
\includegraphics[width=0.5\textwidth]{./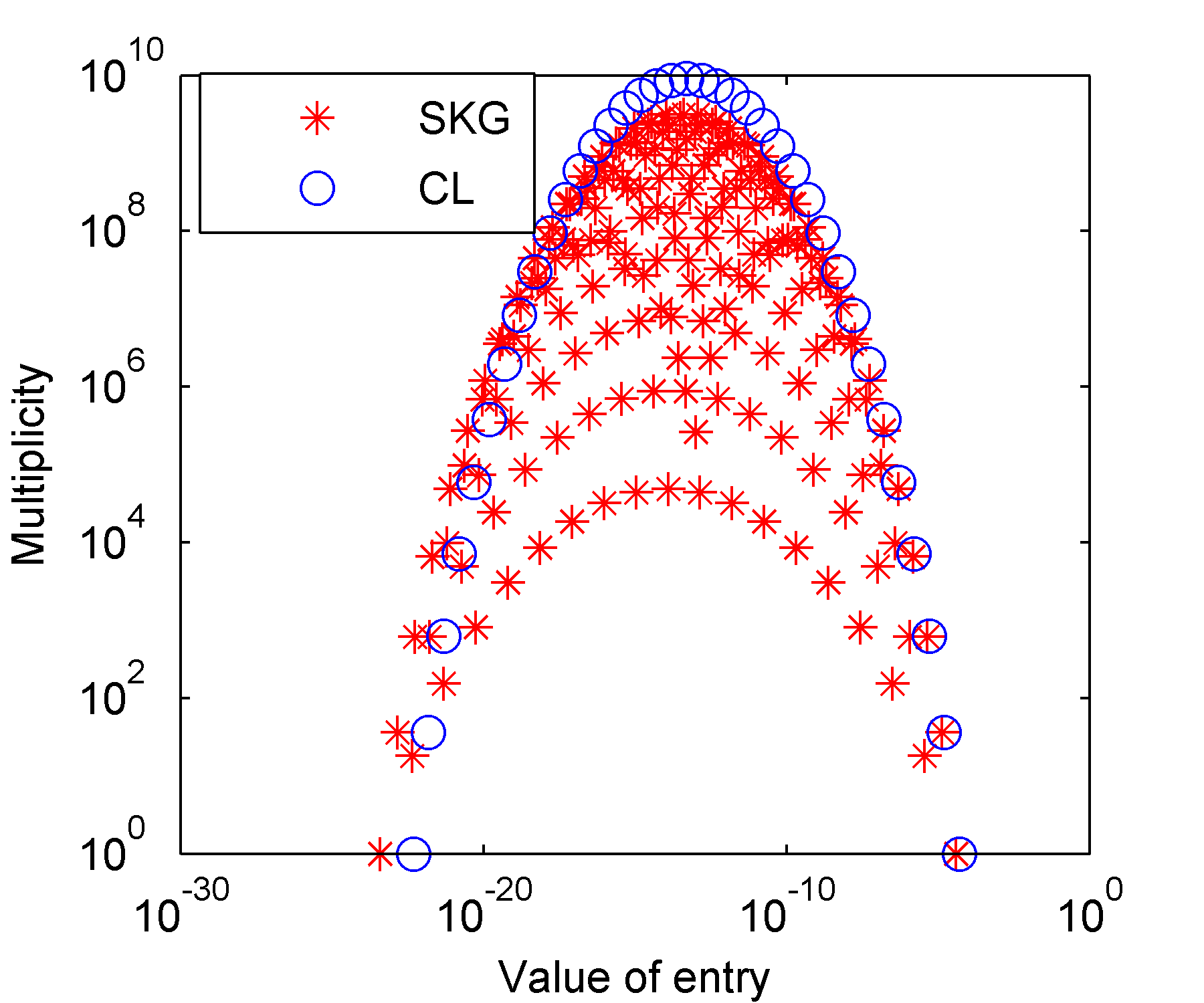} }
\caption{ \label{fig:skgentries}  Distribution of entries of $P_{\textrm{SKG}}$ and $P_{\textrm{CL}}$. }
\end{figure}

\Fig{bin} makes a more faithful comparison between the $P_{\textrm{SKG}}$ and $P_{\textrm{CL}}$ matrices. As we note from \Fig{skgentries}, $P_{\textrm{CL}}$ has a much smaller set of distinct entries. Suppose the distinct values
in $P_{\textrm{CL}}$ are $v_1 > v_2 > v_3 \ldots$. Associate a bin with each distinct entry of $P_{\textrm{CL}}$. 
For each entry of $P_{\textrm{SKG}}$,
place it in the bin corresponding to the entry of $P_{\textrm{CL}}$ with the \emph{closest value}.
So, if some entry in $P_{\textrm{SKG}}$ has value $v$, we determine the index $i$ such that
$|v - v_i|$ is minimized. This entry is placed in the $i$th bin. We can now look at the size
of each bin for $P_{\textrm{SKG}}$. The size of the $i$th bin for the $P_{\textrm{CL}}$ is simply
the multiplicity of $v_i$ in $P_{\textrm{CL}}$.

Observe how these sizes are practically \emph{identical} for large enough entry value.
Indeed the former portion of these plots, for value $< 10^{-20}$, only accounts
for a total of $< 10^{-5}$ of the probability mass. This means that the fraction
of edges that will correspond to these entries is at most $10^{-5}$. We can also argue
that these entries correspond only to edges joining very low degree vertices to each other.
In other words, the portion where these curves differ is really immaterial to the structure
of the final graph generated.

\begin{figure}[thb]
\centerline{
\includegraphics[width=0.5\textwidth]{./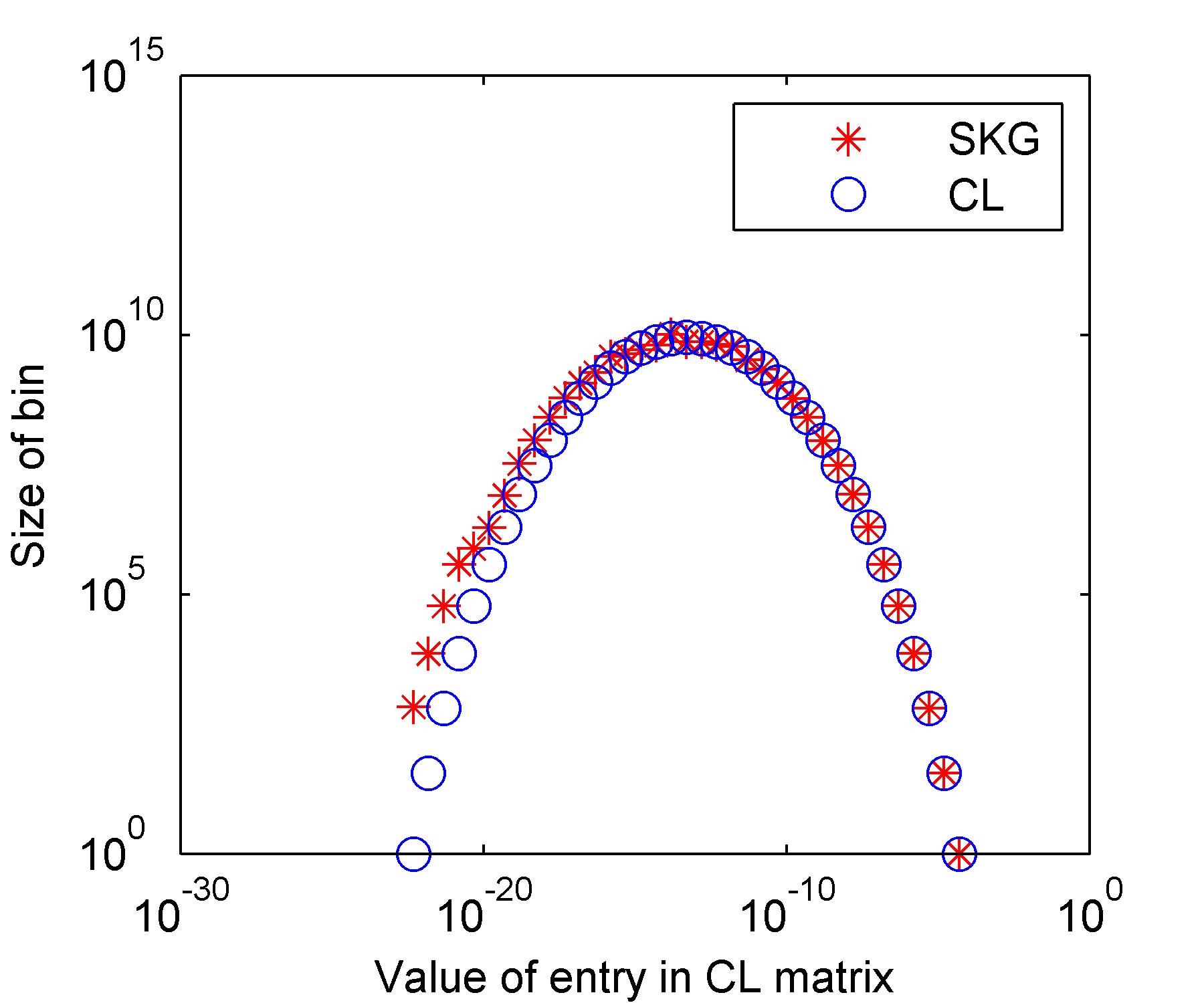} }
\caption{Bin sizes for $P_{\textrm{SKG}}$ and $P_{\textrm{CL}}$ \label{fig:bin}  }
\end{figure} 

This is very strong evidence that SKG behaves like a CL model. The structure of the matrices
are extremely similar to each other. \Fig{cp} is even more convincing. Now, instead of just
looking at the size of each bin, we look at the total probability mass of each bin. For $P_{\textrm{SKG}}$
matrix, this is the sum of entries in a particular bin. For $P_{\textrm{CL}}$, this
is the product of the size of the bin and the value (which is the again just the sum
of entries in that bin). Again, we note the almost exact coincidence of these plots in the
regime where the probabilities matter. Not only are the number of entries in each bin
(roughly) the same, so is the total probability mass in the bin.

\begin{figure}[thb]
\centerline{
\includegraphics[width=0.5\textwidth]{./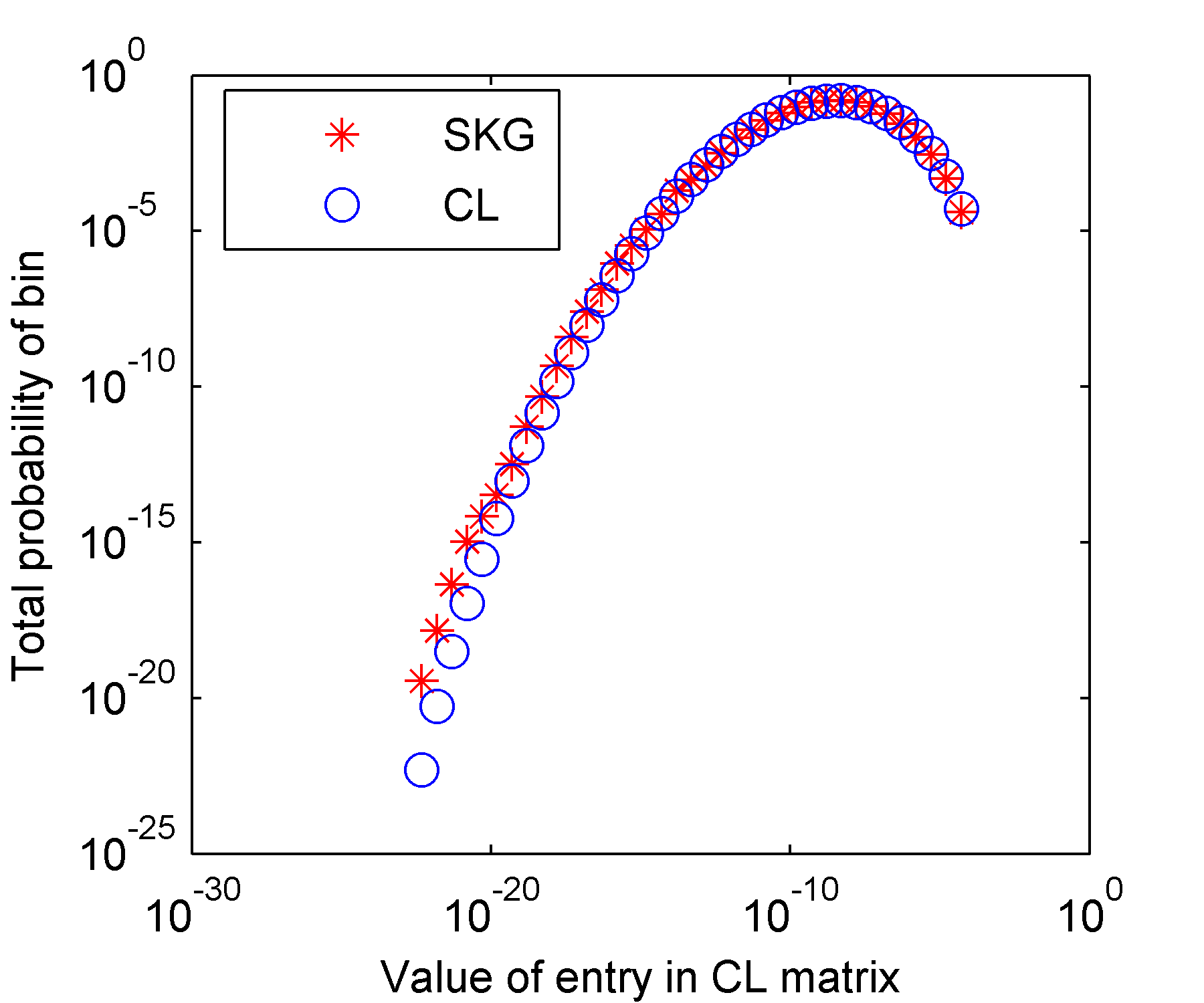} }
\caption{Probability mass of the bins \label{fig:cp}  }
\end{figure} 

We now generate a random sample from $P_{\textrm{SKG}}$ and one from $P_{\textrm{CL}}$. \Fig{spy} shows MATLAB spy plots of the corresponding graphs (represented by their adjacency matrices). 
One of the motivations for the SKG model was that it had a fractal or self-similar structure. It appears
that the CL graph shares the same self-similarity. Furthermore, this self-repetition looks
identical for the both SKG and CL graphs.

\begin{figure*}[t]
  \centering
  \includegraphics[width=\textwidth,trim=0 115 0 110,clip]{./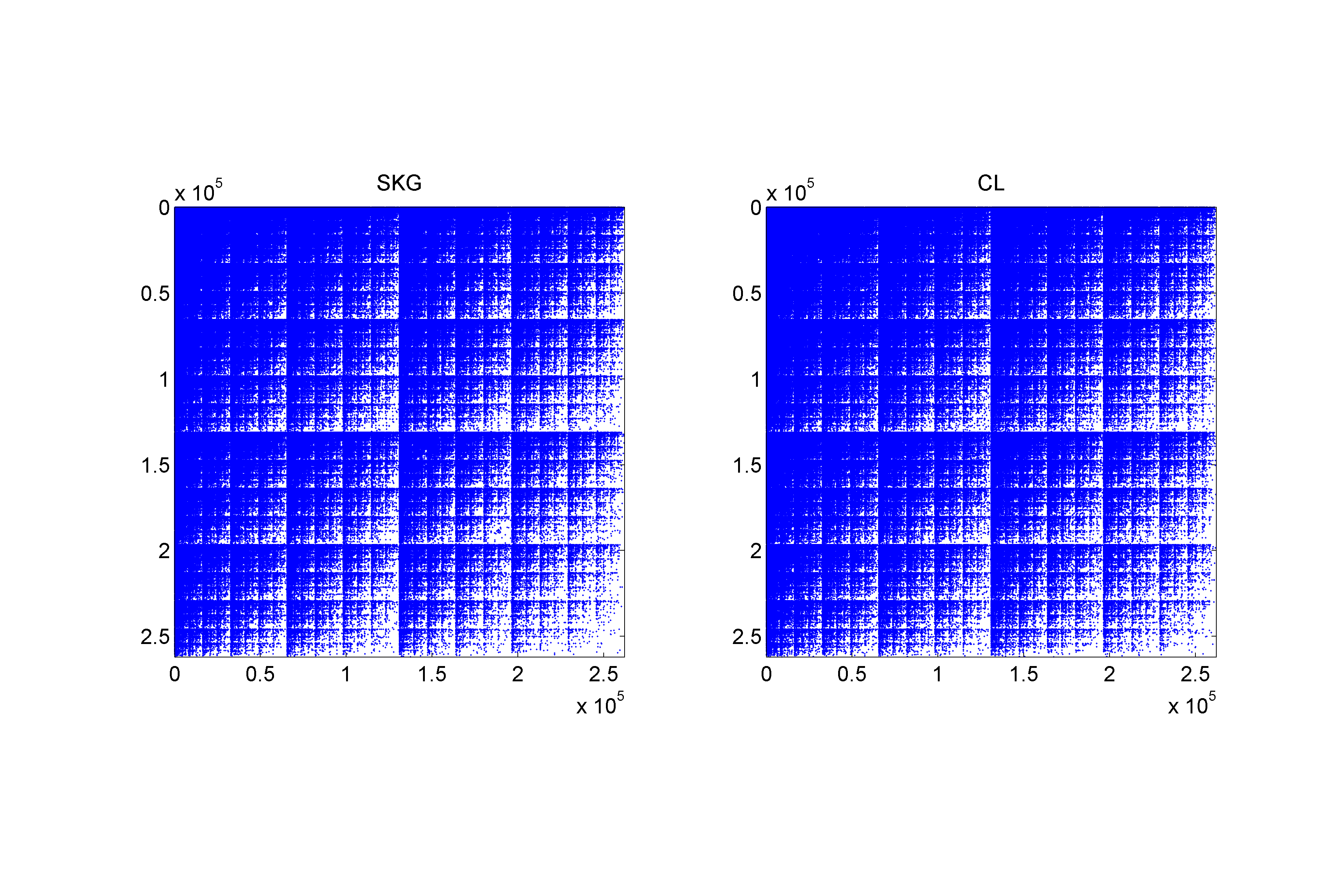} 
  \caption{Spy plots of SKG drawn from $P_{\textrm{SKG}}$ and CL graph drawn from $P_{\textrm{CL}}$}  
  \label{fig:spy}
\end{figure*}

\section{Mathematical justifications} \label{sec:math}

We prove that when the entries of the matrix $T$ satisfy the 
condition $t_1/t_2 = t_3/t_4$, then SKG is \emph{identical} to the CL model.

\medskip

\begin{theorem} \label{thm:skg-CL} Consider an SKG model where $T$ satisfies
the following:
$$ \frac{t_1}{t_2} = \frac{t_3}{t_4}. $$
Then $P_{\textrm{SKG}} = P_{\textrm{CL}}$.
\end{theorem}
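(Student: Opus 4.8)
The plan is to reduce everything to the explicit formulas of \Clm{entries} and to observe that the hypothesis is exactly what makes the ``common zeros'' count $c_z$ drop out of $P_{\textrm{SKG}}$. Write $r = t_1/t_2 = t_3/t_4$, so $t_1 = r\,t_2$ and $t_3 = r\,t_4$. Substituting into $P_{\textrm{SKG}}(i,j) = t_1^{c_z}t_2^{z_i-c_z}t_3^{z_j-c_z}t_4^{\ell-z_i-z_j+c_z}$, the factors of $r$ collect to $r^{c_z+(z_j-c_z)} = r^{z_j}$, while the $c_z$ contributions to the exponents of $t_2$ and $t_4$ cancel; this leaves $P_{\textrm{SKG}}(i,j) = r^{z_j}\,t_2^{z_i}\,t_4^{\ell-z_i}$. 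On the CL side, substituting the same relations into $P_{\textrm{CL}}(i,j) = (t_1+t_2)^{z_i}(t_3+t_4)^{\ell-z_i}(t_1+t_3)^{z_j}(t_2+t_4)^{\ell-z_j}$ gives $r^{z_j}\,t_2^{z_i}\,t_4^{\ell-z_i}\,\bigl[(r+1)(t_2+t_4)\bigr]^{\ell}$, and the normalization $t_1+t_2+t_3+t_4 = 1$ reads $(r+1)(t_2+t_4) = 1$ under the substitution, so the bracketed factor is $1$. The two sides now agree entrywise, which is the claim.

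I would also include the conceptual explanation, since it shows \emph{why} the identity holds. Using $t_4 = 1 - t_1 - t_2 - t_3$, the hypothesis $t_1/t_2 = t_3/t_4$ is equivalent to $t_1 t_4 = t_2 t_3$, which in turn is equivalent to $t_1 = (t_1+t_2)(t_1+t_3)$. The right-hand side is the product of the probability of landing in the top half ($t_1+t_2$) and the probability of landing in the left half ($t_1+t_3$) at a single recursion step, so this equality says precisely that, at each level, the row bit and the column bit chosen by SKG are independent. Since the $\ell$ levels are mutually independent, the source vertex $i$ and the sink vertex $j$ produced by a single edge insertion are independent, whence $P_{\textrm{SKG}}(i,j) = \Pr[\text{source}=i]\cdot\Pr[\text{sink}=j]$. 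By the computation already carried out in the proof of \Clm{entries}, $\Pr[\text{source}=i] = (t_1+t_2)^{z_i}(t_3+t_4)^{\ell-z_i}$ and $\Pr[\text{sink}=j] = (t_1+t_3)^{z_j}(t_2+t_4)^{\ell-z_j}$, and this product is by definition the associated $P_{\textrm{CL}}(i,j)$.

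Neither route has a real obstacle; both are short. The only points needing a bit of care are the algebraic equivalence between the ratio hypothesis and the factorization $t_1 = (t_1+t_2)(t_1+t_3)$ (equivalently $t_1 t_4 = t_2 t_3$), and, in the direct computation, keeping the exponent bookkeeping clean enough that the cancellation of $c_z$ is manifest. I expect to present the explicit-formula calculation as the main proof and the independence interpretation as a remark explaining the phenomenon.
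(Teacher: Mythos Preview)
Your proposal is correct, and your primary route---substituting the ratio hypothesis into the explicit formulas of \Clm{entries} and simplifying until both sides coincide---is exactly the approach the paper takes, though your parametrization (keeping $t_2,t_4$ and the single ratio $r$) makes the cancellation of $c_z$ more transparent than the paper's two-parameter $\alpha,\beta$ substitution. Your second, conceptual explanation via the equivalence $t_1 t_4 = t_2 t_3 \Leftrightarrow t_1 = (t_1+t_2)(t_1+t_3)$ and per-level independence of the row and column bits is a genuinely nice addition that the paper does not include; it is worth keeping as the remark you describe, since it explains \emph{why} the condition is natural rather than merely verifying the identity.
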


\medskip

\begin{proof} Let $\alpha=t_1/t_2=t_3/t_4$, and let $t_3=\beta t_2$. Then,
$t_1=\alpha^2\beta t_4$; $t_3=\alpha\beta t_2$; and $t_2=\alpha t_4$. 
Note that since $t_1+t_2+t_3+t_4=1$, 
\begin{equation} \label{eq:sum}
(\alpha^2\beta +  \alpha + \alpha\beta+1)t_4=1
\end{equation}

We use the formula given in \Clm{entries} for the $(i,j)$ entry of the SKG and CL matrices.
 
By simple substitution, the entry for SKG is
\begin{eqnarray}
& & (t_4\alpha^2\beta)^{c_z} (t_4\alpha)^{z_i-c_z} (t_4\alpha\beta)^{z_j - c_z} {t_4}^{\ell - z_i - z_j + c_z} \nonumber \\
& = & t^\ell_4\alpha^{z_i+z_j}\beta^{z_j} \label{eq:SKGentry}
\end{eqnarray}
Analogously, for the CL matrix, the entry has value
 \begin{eqnarray*}
&&  (t_1+t_2)^{z_i}(t_3+t_4)^{\ell-z_i}(t_1+t_3)^{z_j}(t_2+t_4)^{\ell-z_j}  \\[1ex]
& = &  [t_4(\alpha^2\beta+\alpha)]^{z_i}[t_4(\alpha\beta+1)]^{\ell-z_i} \times \\
& & [t_4(\alpha^2\beta + \alpha\beta)]^{z_j}[\alpha(\alpha+1)]^{\ell-z_j}  \\[1ex]
 &=& t^{2\ell}_4 (\alpha^2\beta\! +\! \alpha)^{z_i} (\alpha\beta\!+\!1)^{\ell-z_i} (\alpha^2\beta\! +\! \alpha\beta)^{z_j} (\alpha\!+\! 1)^{\ell-z_j}  \\[1ex]
& =& t^{2\ell}_4 \alpha^{z_i+z_j}\beta^{z_j}(\alpha\beta+1)^{z_i}(\alpha\beta+1)^{\ell-z_i} \times \\
& & (\alpha+1)^{z_j} (\alpha+1)^{\ell-z_j} \\[1ex]
& =& t^{2\ell}_4 \alpha^{z_i+z_j}\beta^{z_j}(\alpha^2\beta + \alpha + \alpha\beta+1)^{\ell} \\[1ex]
& =& t^{\ell}_4 \alpha^{z_i+z_j}\beta^{z_j}[t_4(\alpha^2\beta + \alpha + \alpha\beta+1)]^{\ell} \\[1ex]
& = & t^\ell_4\alpha^{z_i+z_j}\beta^{z_j}
 \end{eqnarray*}
The last part follows from \Eqn{sum}. This is exactly the same as \Eqn{SKGentry}.
$\myproofend$
\end{proof}
	
\begin{figure*}[thb]
  \centering
  \subfloat[Degree distribution]{\label{fig:epinions-deg}
  \includegraphics[width=.9\columnwidth,trim=0 0 0 0]{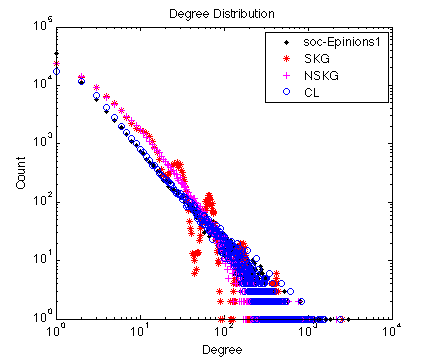}
  }
  \subfloat[Clustering coefficients]{\label{fig:epinions-cc}
  \includegraphics[width=.9\columnwidth,trim=0 0 0 0]{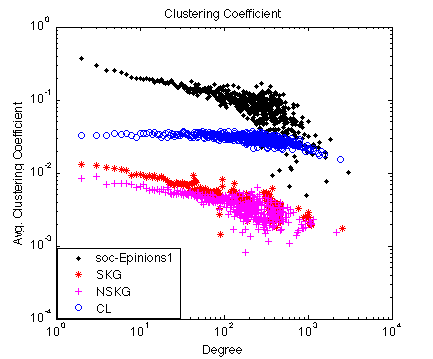}
  }\\
  \subfloat[Eigenvalues of adjacency matrices]{\label{fig:epinions-eig}
  \includegraphics[width=.9\columnwidth,trim=0 0 0 0]{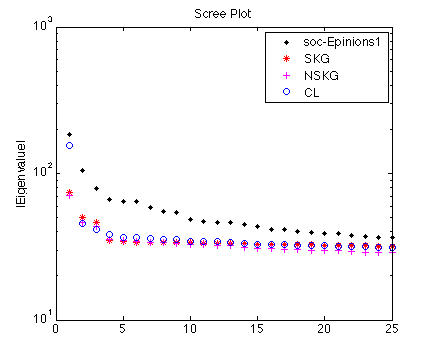}
  }
  \subfloat[Core decompositions]{\label{fig:epinions-core}
  \includegraphics[width=.9\columnwidth,trim=0 0 0 0]{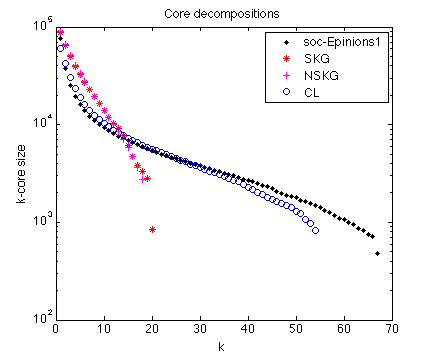}
  }  
  \caption{
  The figure compares the fits of various models for the social network soc-Epinions.}
  \label{fig:epinions}
\end{figure*}

\section{Fitting SKG vs CL} \label{sec:fit}

Fitting procedures for SKG model have been given in \cite{LeChKlFa10}. This is often cited as
a reason for the popularity of SKG. These fits are based on algorithms
for maximizing likelihood, but can take a significant amount of time to run.
The CL model is fit by simply taking the degree distribution of the original graph.
Note that the CL model uses a lot more parameters than SKG, which only requires
$5$ independent numbers. In that sense, SKG is a very appealing model regardless of any
other deficiencies.

We show comparisons of the CL, SKG, and NSKG models with respect to three different real graphs.
For directed graphs, we look at the undirected version where directions are removed from all the edges. 
The real graphs are the following:
\begin{asparaitem}
	\item soc-Epinions: This is a social network from the Epinions website, which tracks
	the ``who-trusts-whom" relationship \cite{Snap}. It has 75879 vertices and 811480 edges. The SKG
	parameters for this graph from \cite{LeChKlFa10} are:
	$T = [0.4668 \ 0.2486; 0.2243 \ 0.0603]$, $\ell = 17$.
	\item ca-HepTh: This is a co-authorship network from high energy physics \cite{Snap}. It has 9875 vertices
	and 51946 edges. The SKG parameters for this graph from \cite{LeChKlFa10} are: 
	$T =  [0.469455 \ 0.127350; 0.127350 \ 0.275846]$, $\ell = 14$.
	\item cit-HepPh: This is a citation network from high energy physics \cite{Snap}. 
	It has 34546 vertices and 841754 edges.
	The SKG parameters from \cite{LeChKlFa10} are:
	$T = [0.429559 \ 0.189715; 0.153414 \ 0.227312]$, $\ell = 15$.
\end{asparaitem}	

The comparisons between the properties are given, respectively, in \Fig{epinions}, \Fig{ca-HepTh}, and
\Fig{cit-HepPh}. In all of these,
we see that CL (as expected) gives good fits to the degree distributions. 
For soc-Epinions, we see in \Fig{epinions-deg} that the oscillations of the SKG degree distribution and how NSKG
smoothens it out.
Observe that the
clustering coefficients of all the models are completely off. Indeed, for low degree vertices,
the values are off by orders of magnitude. Clearly, no model is capturing the abundance of
triangles in these graphs. The eigenvalues of the model graphs are also distant
from the real graph, but CL performs no worse than SKG (or NSKG). Core decompositions
for soc-Epinions (\Fig{epinions-core}) show that CL fits rather well.
For ca-HepPh (\Fig{ca-HepTh-core}) CL is marginally better than SKG, whereas
for cit-HepTh (\Fig{cit-HepPh-core}), NSKG seems be a better match.

\begin{figure*}[thb]
  \centering
  \subfloat[Degree distribution]{\label{fig:ca-HepTh-deg}
  \includegraphics[width=.9\columnwidth,trim=0 0 0 0]{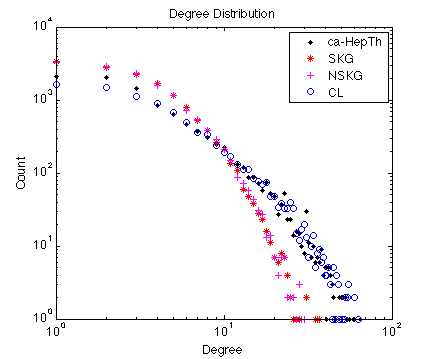}
  }
  \subfloat[Clustering coefficients]{\label{fig:ca-HepTh-cc}
  \includegraphics[width=.9\columnwidth,trim=0 0 0 0]{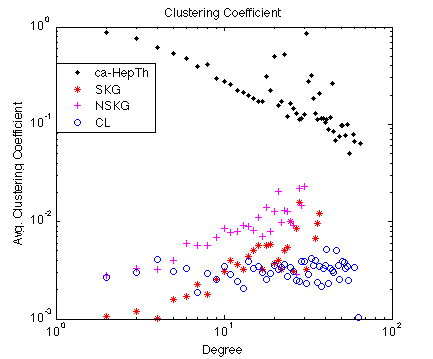}
  }\\
  \subfloat[Eigenvalues of adjacency matrices]{\label{fig:ca-HepTh-eig}
  \includegraphics[width=.9\columnwidth,trim=0 0 0 0]{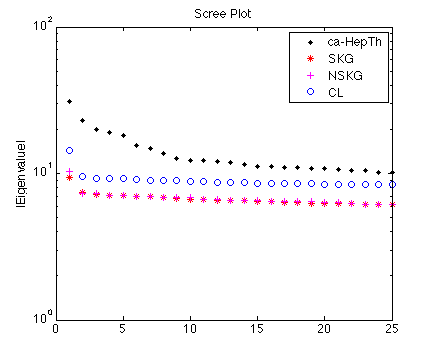}
  }
  \subfloat[Core decompositions]{\label{fig:ca-HepTh-core}
  \includegraphics[width=.9\columnwidth,trim=0 0 0 0]{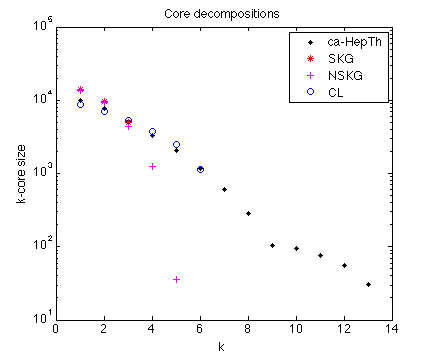}
  }  
  \caption{
  The figures compares the fits of various models for the co-authorship network ca-HepTh.}
  \label{fig:ca-HepTh}
\end{figure*}

All in all, there is no conclusive evidence that SKG or NSKG model these graphs significantly
better than CL. We feel that the comparable performance of CL shows that it should be used
as a control model to compare against.

\begin{figure*}[thb]
  \centering
  \subfloat[Degree distribution]{\label{fig:cit-HepPh-deg}
  \includegraphics[width=.9\columnwidth,trim=0 0 0 0]{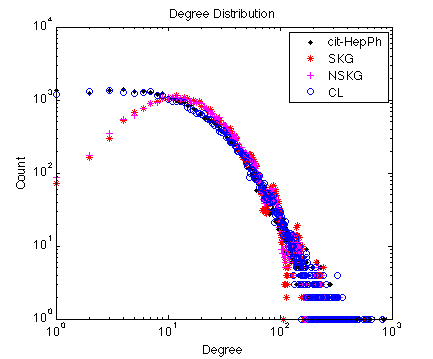}
  }
  \subfloat[Clustering coefficients]{\label{fig:cit-HepPh-cc}
  \includegraphics[width=.9\columnwidth,trim=0 0 0 0]{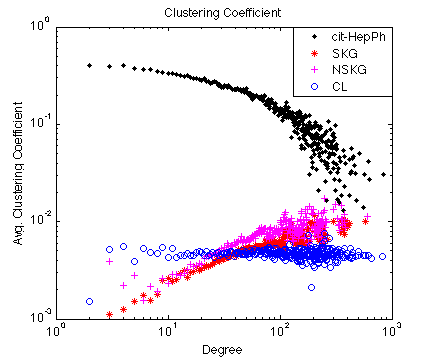}
  }\\
  \subfloat[Eigenvalues of adjacency matrices]{\label{fig:cit-HepPh-eig}
  \includegraphics[width=.9\columnwidth,trim=0 0 0 0]{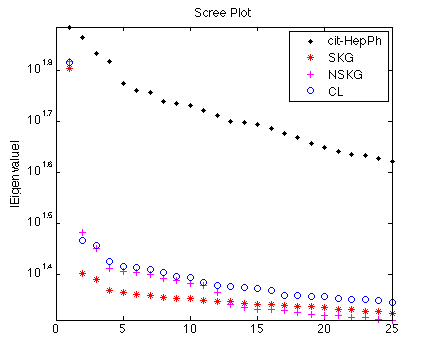}
  }
  \subfloat[Core decompositions]{\label{fig:cit-HepPh-core}
  \includegraphics[width=.9\columnwidth,trim=0 0 0 0]{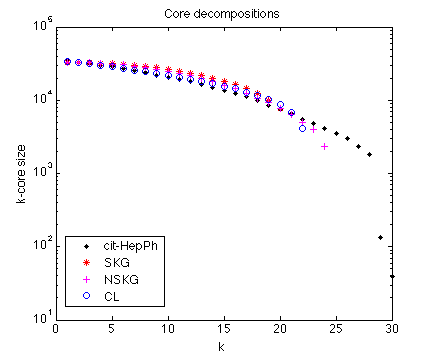}
  }  
  \caption{
  The figures compares the fits of various models for the citation network cit-HepPh.}
  \label{fig:cit-HepPh}
\end{figure*}

\section{Conclusions} \label{sec:conclusion}

Understanding existing graph models is a very important part of graph analysis. We need to
clearly see the benefits and shortcomings of existing models, so that we can use them
more effectively. For these purposes, it is good to have a simple ``baseline" model to compare
against. We feel that the CL model is quite suited for this because of its
efficiency, simplicity, and similarity to SKG. Especially for benchmarking purposes, it is a good candidate for generating simple test graphs. One should not think of this
as representing real data, but as an easy way of creating reasonable looking graphs.
Comparisons with the CL model can give more insight into current models. The similarities and differences may help identify how current graph models differ from each other.

\bibliographystyle{IEEEtran}
\bibliography{final}

\begin{thebibliography}{10}
\providecommand{\url}[1]{#1}
\csname url@samestyle\endcsname
\providecommand{\newblock}{\relax}
\providecommand{\bibinfo}[2]{#2}
\providecommand{\BIBentrySTDinterwordspacing}{\spaceskip=0pt\relax}
\providecommand{\BIBentryALTinterwordstretchfactor}{4}
\providecommand{\BIBentryALTinterwordspacing}{\spaceskip=\fontdimen2\font plus
\BIBentryALTinterwordstretchfactor\fontdimen3\font minus
  \fontdimen4\font\relax}
\providecommand{\BIBforeignlanguage}[2]{{%
\expandafter\ifx\csname l@#1\endcsname\relax
\typeout{** WARNING: IEEEtran.bst: No hyphenation pattern has been}%
\typeout{** loaded for the language `#1'. Using the pattern for}%
\typeout{** the default language instead.}%
\else
\language=\csname l@#1\endcsname
\fi
#2}}
\providecommand{\BIBdecl}{\relax}
\BIBdecl

\bibitem{ChFa06}
D.~Chakrabarti and C.~Faloutsos, ``Graph mining: Laws, generators, and
  algorithms,'' \emph{ACM Computing Surveys}, vol.~38, no.~1, 2006.

\bibitem{LeFa07}
J.~Leskovec and C.~Faloutsos, ``Scalable modeling of real graphs using
  {Kronecker} multiplication,'' in \emph{ICML '07}.\hskip 1em plus 0.5em minus
  0.4em\relax ACM, 2007, pp. 497--504.

\bibitem{LeChKlFa10}
\BIBentryALTinterwordspacing
J.~Leskovec, D.~Chakrabarti, J.~Kleinberg, C.~Faloutsos, and Z.~Ghahramani,
  ``{K}ronecker graphs: An approach to modeling networks,'' \emph{J. Machine
  Learning Research}, vol.~11, pp. 985--1042, Feb. 2010. [Online]. Available:
  \url{http://jmlr.csail.mit.edu/papers/v11/leskovec10a.html}
\BIBentrySTDinterwordspacing

\bibitem{ChZhFa04}
\BIBentryALTinterwordspacing
D.~Chakrabarti, Y.~Zhan, and C.~Faloutsos, ``{R-MAT}: A recursive model for
  graph mining,'' in \emph{SDM '04}, 2004, pp. 442--446. [Online]. Available:
  \url{http://siam.org/proceedings/datamining/2004/dm04_043chakrabartid.pdf}
\BIBentrySTDinterwordspacing

\bibitem{Graph500}
{Graph 500 Steering Committee}, ``Graph 500 benchmark,'' 2010, available at
  \url{http://www.graph500.org/Specifications.html}.

\bibitem{Vineet:2009}
\BIBentryALTinterwordspacing
V.~Vineet, P.~Harish, S.~Patidar, and P.~J. Narayanan, ``Fast minimum spanning
  tree for large graphs on the gpu,'' in \emph{Proceedings of the Conference on
  High Performance Graphics 2009}, ser. HPG '09.\hskip 1em plus 0.5em minus
  0.4em\relax New York, NY, USA: ACM, 2009, pp. 167--171. [Online]. Available:
  \url{http://doi.acm.org/10.1145/1572769.1572796}
\BIBentrySTDinterwordspacing

\bibitem{5713180}
N.~Edmonds, T.~Hoefler, and A.~Lumsdaine, ``A space-efficient parallel
  algorithm for computing betweenness centrality in distributed memory,'' in
  \emph{High Performance Computing (HiPC), 2010 International Conference on},
  dec. 2010, pp. 1 --10.

\bibitem{Pearce:2010}
\BIBentryALTinterwordspacing
R.~Pearce, M.~Gokhale, and N.~M. Amato, ``Multithreaded asynchronous graph
  traversal for in-memory and semi-external memory,'' in \emph{Proceedings of
  the 2010 ACM/IEEE International Conference for High Performance Computing,
  Networking, Storage and Analysis}, ser. SC '10.\hskip 1em plus 0.5em minus
  0.4em\relax Washington, DC, USA: IEEE Computer Society, 2010, pp. 1--11.
  [Online]. Available: \url{http://dx.doi.org/10.1109/SC.2010.34}
\BIBentrySTDinterwordspacing

\bibitem{BulucGilbert}
A.~Bulu\c{c} and J.~R. Gilbert, ``Highly parallel sparse matrix-matrix
  multiplication,'' \emph{CoRR}, vol. abs/1006.2183, 2010.

\bibitem{Pothen}
M.~Patwary, A.~Gebremedhin, and A.~Pothen, ``New multithreaded ordering and
  coloring algorithms for multicore architectures,'' in \emph{Euro-Par 2011
  Parallel Processing}, ser. Lecture Notes in Computer Science, E.~Jeannot,
  R.~Namyst, and J.~Roman, Eds.\hskip 1em plus 0.5em minus 0.4em\relax Springer
  Berlin / Heidelberg, 2011, vol. 6853, pp. 250--262.

\bibitem{HOO11}
S.~Hong, T.~Oguntebi, and K.~Olukotun, ``Efficient parallel graph exploration
  on multi-core cpu and gpu,'' in \emph{Proc. Parallel Architectures and
  Compilation Techniques(PACT)}, 2011.

\bibitem{Plimpton2011610}
\BIBentryALTinterwordspacing
S.~J. Plimpton and K.~D. Devine, ``Map{R}educe in {MPI} for large-scale graph
  algorithms,'' \emph{Parallel Computing}, vol.~37, no.~9, pp. 610 -- 632,
  2011, emerging Programming Paradigms for Large-Scale Scientific Computing.
  [Online]. Available:
  \url{http://www.sciencedirect.com/science/article/pii/S0167819111000172}
\BIBentrySTDinterwordspacing

\bibitem{4536261}
D.~Bader and K.~Madduri, ``Snap, small-world network analysis and partitioning:
  An open-source parallel graph framework for the exploration of large-scale
  networks,'' in \emph{Parallel and Distributed Processing, 2008. IPDPS 2008.
  IEEE International Symposium on}, april 2008, pp. 1 --12.

\bibitem{4454426}
B.~Hendrickson and J.~Berry, ``Graph analysis with high-performance
  computing,'' \emph{Computing in Science Engineering}, vol.~10, no.~2, pp. 14
  --19, march-april 2008.

\bibitem{Schmidt2009417}
\BIBentryALTinterwordspacing
M.~C. Schmidt, N.~F. Samatova, K.~Thomas, and B.-H. Park, ``A scalable,
  parallel algorithm for maximal clique enumeration,'' \emph{Journal of
  Parallel and Distributed Computing}, vol.~69, no.~4, pp. 417 -- 428, 2009.
  [Online]. Available:
  \url{http://www.sciencedirect.com/science/article/pii/S0743731509000082}
\BIBentrySTDinterwordspacing

\bibitem{AiChLu01}
W.~Aiello, F.~Chung, and L.~Lu, ``A random graph model for power law graphs,''
  \emph{Experimental Mathematics}, vol.~10, pp. 53--66, 2001.

\bibitem{ChLu02}
F.~Chung and L.~Lu, ``The average distances in random graphs with given
  expected degrees,'' \emph{Proceedings of the National Academy of Sciences
  USA}, vol.~99, pp. 15\,879--15\,882, 2002.

\bibitem{ChLu02-2}
------, ``Connected components in random graphs with given degree sequences,''
  \emph{Annals of Combinatorics}, vol.~6, pp. 125--145, 2002.

\bibitem{ErRe59}
P.~Erd\"{o}s and A.~R\'{e}nyi, ``On random graphs,'' \emph{Publicationes
  Mathematicae}, vol.~6, pp. 290--297, 1959.

\bibitem{ErRe60}
------, ``On the evolution of random graphs,'' \emph{Publications of the
  Mathematical Institute of the Hungarian Academy of Sciences}, vol.~5, pp.
  17--61, 1960.

\bibitem{SePiKo11}
C.~Seshadhri, A.~Pinar, and T.~Kolda, ``An in-depth analysis of stochastic
  kronecker graphs,'' in \emph{Proc. ICDM 2011, IEEE Int. Conf. Data Mining},
  2011, to appear.

\bibitem{SePiKo11-arxiv}
\BIBentryALTinterwordspacing
------, ``An in-depth analysis of stochastic kronecker graphs,'' September
  2011, arxiv:1102.5046. [Online]. Available:
  \url{http://arxiv.org/abs/1102.5046}
\BIBentrySTDinterwordspacing

\bibitem{LeChKlFa05}
J.~Leskovec, D.~Chakrabarti, J.~Kleinberg, and C.~Faloutsos, ``Realistic,
  mathematically tractable graph generation and evolution, using {Kronecker}
  multiplication,'' in \emph{PKDD 2005}.\hskip 1em plus 0.5em minus 0.4em\relax
  Springer, 2005, pp. 133--145.

\bibitem{KiLe10}
M.~Kim and J.~Leskovec, ``Multiplicative attribute graph model of real-world
  networks,'' in \emph{WAW 2010}, 2010, pp. 62--73.

\bibitem{GrSuPo10}
C.~Gro\"{e}r, B.~D. Sullivan, and S.~Poole, ``A mathematical analysis of the
  {R-MAT} random graph generator,'' Jul. 2010, available at
  \url{http://www.ornl.gov/~b7r/Papers/rmat.pdf}.

\bibitem{MaXu10}
M.~Mahdian and Y.~Xu, ``Stochastic {K}ronecker graphs,'' \emph{Random
  Structures \& Algorithms}, vol.~38, no.~4, pp. 453--466, 2011.

\bibitem{SaCaWiZa10}
A.~Sala, L.~Cao, C.~Wilson, R.~Zablit, H.~Zheng, and B.~Y. Zhao,
  ``Measurement-calibrated graph models for social network experiments,'' in
  \emph{WWW '10}.\hskip 1em plus 0.5em minus 0.4em\relax ACM, 2010, pp.
  861--870.

\bibitem{MiBlWo10}
\BIBentryALTinterwordspacing
B.~Miller, N.~Bliss, and P.~Wolfe, ``Subgraph detection using eigenvector {L1}
  norms,'' in \emph{NIPS 2010}, 2010, pp. 1633--1641. [Online]. Available:
  \url{http://books.nips.cc/papers/files/nips23/NIPS2010_0954.pdf}
\BIBentrySTDinterwordspacing

\bibitem{MoKiNeVi10}
S.~Moreno, S.~Kirshner, J.~Neville, and S.~V.~N. Vishwanathan, ``Tied
  {Kronecker} product graph models to capture variance in network
  populations,'' in \emph{Proc. 48th Annual Allerton Conf. on Communication,
  Control, and Computing}, Oct. 2010, pp. 1137--1144.

\bibitem{Ne03}
M.~E.~J. Newman, ``The structure and function of complex networks,'' \emph{SIAM
  Review}, vol.~45, no.~2, pp. 167--256, 2003.

\bibitem{MiPa02}
C.~Papadimitriou and M.~Mihail, ``On the eigenvalue power law,'' in
  \emph{RANDOM}, 2002, pp. 254--262.

\bibitem{ChLuVu03}
F.~Chung, L.~Lu, and V.~Vu, ``Eigenvalues of random power law graphs,''
  \emph{Annals of Combinatorics}, vol.~7, pp. 21--33, 2003.

\bibitem{GiNe02}
M.~Girvan and M.~Newman, ``Community structure in social and biological
  networks,'' \emph{Proceedings of the National Academy of Sciences}, vol.~99,
  pp. 7821--7826, 2002.

\bibitem{New02-1}
\BIBentryALTinterwordspacing
M.~E.~J. Newman, ``Mixing patterns in networks,'' \emph{Phys. Rev. E.},
  vol.~67, p. 026126, Feb.~04 2002. [Online]. Available:
  \url{http://arxiv.org/abs/cond-mat/0209450}
\BIBentrySTDinterwordspacing

\bibitem{New02-2}
\BIBentryALTinterwordspacing
------, ``Assortative mixing in networks,'' \emph{Phys. Rev. Letter}, vol.~89,
  p. 208701, May~20 2002. [Online]. Available:
  \url{http://arxiv.org/abs/cond-mat/0205405}
\BIBentrySTDinterwordspacing

\bibitem{Snap}
{Stanford Network Analysis Project (SNAP)}, available at
  \url{http://snap.stanford.edu/}.

\end{thebibliography}

\end{document}